\newtheorem{thm}{Theorem}
\newtheorem{lemma}{Lemma}
\newcommand{\nn}{\nonumber\\}
\begin{document}
%
% paper title
% Titles are generally capitalized except for words such as a, an, and, as,
% at, but, by, for, in, nor, of, on, or, the, to and up, which are usually
% not capitalized unless they are the first or last word of the title.
% Linebreaks \\ can be used within to get better formatting as desired.
% Do not put math or special symbols in the title.
\title{Status updates through M/G/1/1 queues with HARQ}

% author names and affiliations
% use a multiple column layout for up to three different
% affiliations
\author{\IEEEauthorblockN{Elie Najm}
\IEEEauthorblockA{LTHI, EPFL, Lausanne, Switzerland\\
Email: elie.najm@epfl.ch}
\and
\IEEEauthorblockN{Roy Yates}
\IEEEauthorblockA{ECE Dept., Rutgers University, USA\\
Email: ryates@rutgers.edu}
\and
\IEEEauthorblockN{Emina Soljanin}
\IEEEauthorblockA{ECE Dept., Rutgers University, USA\\
Email: emina.soljanin@rutgers.edu}
}

% conference papers do not typically use \thanks and this command
% is locked out in conference mode. If really needed, such as for
% the acknowledgment of grants, issue a \IEEEoverridecommandlockouts
% after \documentclass

% for over three affiliations, or if they all won't fit within the width
% of the page, use this alternative format:
% 
%\author{\IEEEauthorblockN{Michael Shell\IEEEauthorrefmark{1},
%Homer Simpson\IEEEauthorrefmark{2},
%James Kirk\IEEEauthorrefmark{3}, 
%Montgomery Scott\IEEEauthorrefmark{3} and
%Eldon Tyrell\IEEEauthorrefmark{4}}
%\IEEEauthorblockA{\IEEEauthorrefmark{1}School of Electrical and Computer Engineering\\
%Georgia Institute of Technology,
%Atlanta, Georgia 30332--0250\\ Email: see http://www.michaelshell.org/contact.html}
%\IEEEauthorblockA{\IEEEauthorrefmark{2}Twentieth Century Fox, Springfield, USA\\
%Email: homer@thesimpsons.com}
%\IEEEauthorblockA{\IEEEauthorrefmark{3}Starfleet Academy, San Francisco, California 96678-2391\\
%Telephone: (800) 555--1212, Fax: (888) 555--1212}
%\IEEEauthorblockA{\IEEEauthorrefmark{4}Tyrell Inc., 123 Replicant Street, Los Angeles, California 90210--4321}}

% use for special paper notices
%\IEEEspecialpapernotice{(Invited Paper)}

% make the title area
\maketitle

% As a general rule, do not put math, special symbols or citations
% in the abstract
\begin{abstract}
We consider a system where randomly generated updates are to be transmitted to a monitor, but only a single update can be in the transmission service at a time. Therefore, the source has to prioritize between the two possible transmission policies: preempting the current update or discarding the new one. We consider Poisson arrivals and general service time, and refer to this system as the M/G/1/1 queue. We start by studying the average status update age and the optimal update arrival rate for these two schemes under general service time distribution. We then apply these results on two practical scenarios in which updates are sent through an erasure channel using $(a)$ an infinite incremental redundancy (IIR) HARQ system and $(b)$ a fixed redundancy (FR) HARQ system. 
%In IIR, the transmission of an update continues until $k_s$ unerased symbols are received. In the FR system, the update is divided into $k_p$ packets encoded ratelessly and each packet is encoded using an $(n_s,k_s)$-Maximum Distance Separable (MDS) code. 
We show that in both schemes the best strategy would be not to preempt. Moreover, we also prove that, from an age point of view, IIR is better than FR.   
\end{abstract}

% no keywords

% For peer review papers, you can put extra information on the cover
% page as needed:
% \ifCLASSOPTIONpeerreview
% \begin{center} \bfseries EDICS Category: 3-BBND \end{center}
% \fi
%
% For peerreview papers, this IEEEtran command inserts a page break and
% creates the second title. It will be ignored for other modes.
\IEEEpeerreviewmaketitle

\section{Introduction}
\label{sect:sect_intro}
% 1. Describe previous work especially the ones talking about preemption
% 2. Describe the two update transmission schemes: preemption and blocking.
% 3. Describe the two coding schemes: IIR and FR. Talk about Emina's paper
% 4. Give the plan of the paper

Previous work on status update (\cite{KaulYatesGruteser-2012Infocom,2012CISS-KaulYatesGruteser,CostaCodreanuEphremides2016,KamKompellaEphremides2013ISIT,NajmNasser-ISIT2016,YatesKaul-2012ISIT}) used an Age of Information (AoI) metric in order to assess the freshness of randomly generated updates sent by one or multiple sources to a monitor through the network. In these papers, updates are assumed to be generated according to a Poisson process and the main metric used to quantify the \emph{age} is the time average age (which we will call average age) given by
\begin{equation}
\label{eq:eq_age_definition}
\Delta = \lim_{\tau\to\infty} \frac{1}{\tau}\int_0^\tau \Delta(t)\mathrm{d}t,
\end{equation} 
where $\Delta(t)$ is the instantaneous age of the last successfully received update. If this update was generated at time $u(t)$ then its \emph{age} at time $t$ is $\Delta(t)=t-u(t)$. When the system is idle or an update is being transmitted then the instantaneous age increases linearly with time, as depicted in Fig.~\ref{fig3}. Once an update generated at time $t_i$ is received by the monitor at $t'_i$, $\Delta(t)$ drops to the value $t'_i-t_i$. This results in the sawtooth sample path seen in Fig.~\ref{fig3}. 

In this paper, we assume updates are generated according to a Poisson process with rate $\lambda$, but the system can handle only one update at a time without any buffer to store incoming updates. This means that whenever a new update is generated and the system is busy, the transmitter has to make a decision: does it give higher priority to the new update or to the one being transmitted? In other words, does it preempt or not? It has been shown that for exponential update service times, preemption ensures the lowest average age \cite{2012CISS-KaulYatesGruteser}. However, the work in \cite{NajmNasser-ISIT2016} suggests that under the assumption of gamma distributed service time, preemption might not be the best policy.

This work answers the previous question when we assume updates are sent through a symbol erasure channel with erasure rate $\delta$, while using hybrid ARQ (HARQ) protocols to combat erasures. Two HARQ protocols, introduced in \cite{heindlmaier2014isn}, are studied: $(a)$ infinite incremental redundancy (IIR)  and $(b)$  fixed redundancy (FR). In both cases we assume a generated update contains $K$ information symbols. In IIR, encoding is performed at the physical layer where the $K$ information symbols are encoded using a rateless code. Hence, the transmission of an update continues until $k_s=K$ unerased symbols are received. As for the FR, coding is applied at the physical and packet layer. This means that the update is divided into $k_p$ packets with each packet encoded using an $(n_s,k_s)$-Maximum Distance Separable (MDS) code. So, in this case, the total number of information symbols is $K=k_pk_s$. At the packet level we apply a rateless code and thus the transmission of an update terminates when $k_p$ unerased packets are received. In order to decode a packet, the receiver needs to wait for $n_s$ encoded symbols. Once received, a packet is declared erased if fewer than $k_s$ symbols are successful. It is worth noting that in this setup we send one symbol per channel use and thus the arrival rate $\lambda$ is the number of updates generated per channel use. The effect of these schemes on the transmission time of data was studied in \cite{heindlmaier2014isn}. It was shown that FR leads to a slower delivery than IIR. While the main aim of \cite{heindlmaier2014isn} is the successful delivery of every update, in this paper we are ready to sacrifice some updates for fresher information.

The impact of transmission error on the age was also investigated in \cite{ChenHuang-ISIT2016}. In this paper, service time is assumed exponential  and another age metric is used: the peak age of information. The authors conclude that, in this setup, preemption with update retransmission achieves the lowest age.

To solve the above problem, we first start by deriving in Section~\ref{sec:sec_MG11_no_preempt_general} an expression for the average age under general service time distribution when we choose not to preempt. This model is called M/G/1/1 with blocking. In Section~\ref{sec:sec_MG11_no_preempt_arq}, we use the results in the previous Section~to compute the average age when we consider the IIR and FR protocols. Sections~\ref{sec:sec_MG11_preempt_general} and \ref{sec:sec_MG11_preempt_arq} follow the same logic but  in this case we choose to preempt. This model is called M/G/1/1 with preemption. Finally, Section~\ref{sec:sec_numerical} compares the performances of both models for a given HARQ protocol as well as the performance of both protocols given a model. We show that no matter the protocol, prioritizing the current update is better than preempting it. Moreover, in the case of FR, we show that no matter the model and for a fixed arrival rate $\lambda$, there exists an optimal codeword length $n_s$.

\section{Preliminaries}
\label{sec:sec_preliminaries}
It is important to note that in both M/G/1/1 queues, some updates might be dropped. Hence we call the updates that are not dropped, and thus delivered to the receiver, as \lq\lq successfully received updates\rq\rq\ or \lq\lq successful updates\rq\rq. In addition to that, we also define: $(i)$ $I_i$ to be the true index of the $i^{th}$ successfully received update, $(ii)$ $Y_i= t'_{I_{i+1}}-t'_{I_i}$ to be the interdeparture time between two consecutive successfully received updates, $(iii)$ $X_i = t_{I_i+1} - t_{I_i}$ to be the interarrival time between the successfully transmitted update and the next generated one (which may or may not be successfully transmitted), so $f_X(x) = \lambda e^{-\lambda x}$, $(iV)$ $S_{I_{i}}$ to be the service time of the $I_{i}^{th}$ update with distribution $F_S(t)$, $(v)$ $T_i$ to be the system time, or the time spent by the $i^{th}$ successful update in the queue and $(vi)$ $N_\tau=\max\left\{n;t_{I_n}\leq\tau\right\}$, the number of successfully received updates in the interval $[0,\tau]$. In our models, we assume the service time $S_k$ of the $k^{th}$ update is independent from the interarrival time random variables $\{X_1,X_2,...,X_k,...\}$ and that the sequence $\{S_1,S_2,...\}$ forms an i.i.d process.

From \eqref{eq:eq_age_definition}, Fig.~\ref{fig3} and Fig.~\ref{fig1}, the average age for both M/G/1/1 queues can be also expressed as the sum of the geometric areas $Q_i$ under the instantaneous age curve. Authors in \cite{2012CISS-KaulYatesGruteser} show that
\begin{align}
\label{eq:eq_agedef}
\Delta	&=\lim_{\tau\to\infty}\frac{N_\tau}{\tau}\frac{1}{N_\tau}\sum_{i=1}^{N_\tau} Q_i
= \lambda_e\mathbb{E}(Q_i),
\end{align}
where $\lambda_e =  \lim_{\tau\to\infty}\frac{N_\tau}{\tau}$ and the second equality is justified by the ergodicity of the system.

\section{M/G/1/1 with Blocking}
\label{sec:sec_MG11_no_preempt_general}

In this setup, a generated update is discarded if it finds the system busy. This means an update is served only if it arrives at an idle system. This concept is depicted in Fig.~\ref{fig3}: for instance, the update generated at time $t_2$ is served since the system is empty at that time. However, the updates generated at times $t_3$ and $t_4$ find the system busy and are thus discarded. One important note here is that the system time $T_i$ of the $i^{th}$ successful update is equal to its service time.
\subsection{Average age calculation}
%\label{subsec:subsec_MG11_no_preempt_age_calc}
%From \eqref{eq:eq_age_definition} and Fig.~\ref{fig3}, the average age is given by 
%\begin{align}
%	\label{eq:eq_agedef}
%	\Delta	&= \lim_{\tau\to\infty}\frac{1}{\tau}\int_0^\tau \Delta(t)\mathrm{d}t
%		= \lim_{\tau\to\infty}\frac{N_\tau}{\tau}\frac{1}{N_\tau}\sum_{i=1}^{N_\tau} Q_i
%		= \lambda_e\mathbb{E}(Q_i),
%\end{align}
%where $\lambda_e =  \lim_{\tau\to\infty}\frac{N_\tau}{\tau}$ and the third equality is justified by the ergodicity of the system. Moreover, $N_\tau$ is the number of successfully received updates in the interval $[0,\tau]$. Here also we assume the service time of an update to be given by $F_S(t)$.

\begin{figure}[t]
	\centering
	\includegraphics[scale=0.5]{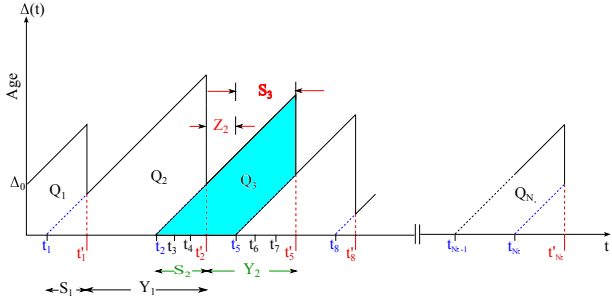}%
	\caption{Variation of the instantaneous age for M/G/1/1 with blocking}
	\label{fig3}
\end{figure}

\begin{lemma}
	\label{lemma:lemma_effective_Rate}
	For an M/G/1/1 blocking system we have, 
	\begin{equation}
	\label{eq:eq_eff_rate}
	\lambda_e = \frac{1}{\mathbb{E}(Y)}=\frac{1}{\mathbb{E}(X)+\mathbb{E}(S)},
	\end{equation}
	where $Y$, $X$ and $S$ are the steady-state counterparts of the variables defined in Section~\ref{sec:sec_preliminaries}. 
\end{lemma}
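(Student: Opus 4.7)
The plan is to compute $\mathbb{E}(Y)$ directly by decomposing the interdeparture interval into a Poisson waiting time plus a service time, and then invoke the elementary renewal theorem to identify $\lambda_e$ with $1/\mathbb{E}(Y)$.

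First, I would fix attention on two consecutive successful updates $I_i$ and $I_{i+1}$. Because the queue blocks new arrivals while busy, the system becomes idle the instant update $I_i$ departs at time $t'_{I_i} = t_{I_i} + S_{I_i}$. Any Poisson arrivals during $[t_{I_i}, t'_{I_i}]$ are discarded, so $I_{i+1}$ is necessarily the first arrival occurring after $t'_{I_i}$. Writing $W_i$ for the waiting time from $t'_{I_i}$ until that next arrival, the memoryless property of the Poisson process gives $W_i \stackrel{d}{=} X$ with $\mathbb{E}(W_i) = \mathbb{E}(X) = 1/\lambda$. Consequently
\begin{equation*}
Y_i = t'_{I_{i+1}} - t'_{I_i} = W_i + S_{I_{i+1}},
\end{equation*}
and taking expectations, using independence of $W_i$ and the service time, yields $\mathbb{E}(Y) = \mathbb{E}(X) + \mathbb{E}(S)$.

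Second, I would justify the formula $\lambda_e = 1/\mathbb{E}(Y)$. The successful departure epochs $\{t'_{I_i}\}$ form a renewal process since the $Y_i$ are i.i.d.: each $Y_i$ is built from a fresh exponential wait $W_i$ (by the strong Markov property of the Poisson process at the stopping time $t'_{I_i}$) and an independent service time $S_{I_{i+1}}$ drawn from the i.i.d.\ service sequence. The counting process $N_\tau$ is exactly the renewal counting process associated to $\{Y_i\}$, so the elementary renewal theorem gives
\begin{equation*}
\lambda_e = \lim_{\tau\to\infty} \frac{N_\tau}{\tau} = \frac{1}{\mathbb{E}(Y)} = \frac{1}{\mathbb{E}(X) + \mathbb{E}(S)}.
\end{equation*}

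The step most worth being careful about is the i.i.d.\ structure of $\{Y_i\}$: one must observe that the discarded arrivals within $[t_{I_i}, t'_{I_i}]$ do not contaminate the distribution of $W_i$, which is exactly where the memoryless property (equivalently, the strong Markov property of the Poisson process at the service-completion time) is needed. Once this is in place, the rest of the argument is a direct application of the renewal theorem and requires no additional hypotheses on the service distribution beyond $\mathbb{E}(S) < \infty$.
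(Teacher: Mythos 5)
Your proof is correct and follows essentially the same approach as the paper: you decompose the interdeparture time as $Y_i = W_i + S_{I_{i+1}}$, where $W_i$ (the paper's $Z_i$) is the exponential residual wait after a departure, invoke the memoryless/strong-Markov property to get i.i.d.\ renewal cycles, and apply the elementary renewal theorem. The only difference is presentational — you spell out explicitly why $W_i$ is a fresh exponential, which the paper states more tersely.
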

\begin{proof}
	$N_\tau$ is a renewal process with inter-renewal time between two renewals given by the random variable $Y$. As shown in Fig.~\ref{fig3}, the renewal period is the interval:
	\begin{equation}
	\label{eq:eq_renewal}
	Y_i = Z_i+S_{i+1}.
	\end{equation}
	Because each departure leaves the system empty and the interarrival times are memoryless, then the interval $Z_i$, which is the residual interarrival time until a new update is generated, is independent of $Y_{i-1}$ and it has an exponential distribution. Hence, all the $Y_i$'s are identically distributed and the $Z_i$'s are stochastically equal to the interarrival time $X$. This proves why $N_{\tau}$ is a renewal process. The claim follows \cite{ross}.
	%	\begin{align*}
	%		\lambda_e	&= \lim_{\tau\to\infty}\frac{N_{\tau}}{\tau}
	%				= \frac{1}{\mathbb{E}(Y)}	= \frac{1}{\mathbb{E}(X)+\mathbb{E}(S)}.
	%	\end{align*}
	%	Finally, the random variable $X$ is also exponential with rate $\lambda$.
\end{proof}
Now we can compute the average age which is given by the following theorem,
\begin{thm}
	\label{thm:thm_no_preempt_age}
	The average age of an M/G/1/1 system with blocking is 
	\begin{equation}
	\label{eq:eq_no_preempt_age}
	\Delta = \mathbb{E}(S)\left(\frac{\beta}{2}(C_S+1)+\frac{1}{\beta}\right),
	\end{equation}
	where $C_S = \frac{\textrm{Var}(S)}{\mathbb{E}(S)^2}$ is the squared coefficient of variation and $\beta= \frac{\rho}{\rho+1}$ with $\rho= \frac{\mathbb{E}(S)}{\mathbb{E}(X)}=\lambda\mathbb{E}(S)$.
\end{thm}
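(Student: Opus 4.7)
The plan is to unpack the formula $\Delta=\lambda_e\mathbb{E}(Q_i)$ from \eqref{eq:eq_agedef}, using Lemma~\ref{lemma:lemma_effective_Rate} for $\lambda_e$, and then compute $\mathbb{E}(Q_i)$ by reading off the geometry of the sample path in Fig.~\ref{fig3}. The shape $Q_i$ is a polygon bounded below by the time axis, above by the instantaneous age, on the left by a departure where the age drops to $T_{i-1}=S_{I_{i-1}}$ (since in the blocking model the system time equals the service time), and on the right by the next successful departure, at horizontal distance $Y_{i-1}$. Viewing $Q_i$ as a trapezoid (rectangle of height $T_{i-1}$ plus a right triangle of legs $Y_{i-1}$), I would write
\begin{equation}
Q_i \;=\; T_{i-1}\,Y_{i-1} \;+\; \tfrac{1}{2}\,Y_{i-1}^{\,2}.
\end{equation}

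Next I would argue the independence structure needed to take expectations. In the blocking model a successful update departs from an empty system, so its service time $S_{I_{i-1}}$ and the subsequent renewal segment $Y_{i-1}=Z_{i-1}+S_{I_i}$ involve disjoint, i.i.d.\ service times and an independent exponential residual $Z_{i-1}$ (as already exploited in the proof of Lemma~\ref{lemma:lemma_effective_Rate}). Hence $T_{i-1}\perp Y_{i-1}$, giving $\mathbb{E}(T_{i-1}Y_{i-1})=\mathbb{E}(S)(\mathbb{E}(X)+\mathbb{E}(S))$, and expanding the square with independence of $Z$ and $S$ yields
\begin{equation}
\mathbb{E}(Y^2) \;=\; \mathbb{E}(Z^2)+2\mathbb{E}(Z)\mathbb{E}(S)+\mathbb{E}(S^2) \;=\; 2\mathbb{E}(X)^2+2\mathbb{E}(X)\mathbb{E}(S)+\mathbb{E}(S^2),
\end{equation}
using $\mathbb{E}(Z)=\mathbb{E}(X)=1/\lambda$ and $\mathbb{E}(Z^2)=2/\lambda^2$.

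Combining via $\Delta = \mathbb{E}(Q)/(\mathbb{E}(X)+\mathbb{E}(S))$ collapses one trapezoidal term to $\mathbb{E}(S)$ and one quadratic term to $\mathbb{E}(X)$, leaving
\begin{equation}
\Delta \;=\; \mathbb{E}(S)+\mathbb{E}(X)+\frac{\mathbb{E}(S^2)}{2(\mathbb{E}(X)+\mathbb{E}(S))}.
\end{equation}
The final step is purely algebraic: substitute $\mathbb{E}(X)=\mathbb{E}(S)(1-\beta)/\beta$ so that $\mathbb{E}(S)+\mathbb{E}(X)=\mathbb{E}(S)/\beta$, and rewrite $\mathbb{E}(S^2)=\mathbb{E}(S)^2(C_S+1)$ in the remaining fraction so that the denominator cancels against one factor of $\mathbb{E}(S)$, producing $\mathbb{E}(S)\,\tfrac{\beta}{2}(C_S+1)$. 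Factoring out $\mathbb{E}(S)$ delivers \eqref{eq:eq_no_preempt_age}.

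The only delicate step is the independence claim $T_{i-1}\perp Y_{i-1}$; this is the reason the blocking model permits such a clean product form, whereas a preemption model would couple the completed service with the residual. Everything else is bookkeeping on the trapezoid and the two expectations $\mathbb{E}(Z^k)$ of the exponential residual, followed by a rearrangement to isolate $\beta$ and $C_S$.
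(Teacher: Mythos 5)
Your proof is correct and follows essentially the same route as the paper: both start from $\Delta=\lambda_e\mathbb{E}(Q)$ with $\lambda_e$ from Lemma~\ref{lemma:lemma_effective_Rate}, both reduce $\mathbb{E}(Q)$ to $\mathbb{E}(S)\mathbb{E}(Y)+\tfrac12\mathbb{E}(Y^2)$ via independence of the prior service time and the current inter-departure interval, and both expand $Y=X+S$ before rearranging in terms of $\beta$ and $C_S$. The only cosmetic difference is your geometric decomposition of $Q_i$ (rectangle plus triangle over $[t'_{I_{i-1}},t'_{I_i}]$) versus the paper's difference of two isoceles triangles; the two choices differ by a zero-expectation term $\tfrac12(S_{i-1}^2-S_i^2)$ and thus yield the same $\mathbb{E}(Q)$.
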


\begin{proof}
	From (\ref{eq:eq_agedef}) we have,
	\begin{equation*}
	\Delta = \lambda_e\mathbb{E}(Q_i).
	\end{equation*}
	$\lambda_e$ is given by Lemma 1, therefore we need to compute the average area of the trapezoid $Q_i$. To do that, notice first that, using a similar argument as the one used in the proof of Lemma 1, the service time $S_i$ and $Y_i$ are independent. Thus,
	\begin{align}
	\mathbb{E}(Q_i)	&= \mathbb{E}\left(\frac{(S_{i-1}+Y_{i-1})^2}{2}-\frac{S_i^2}{2}\right)\nonumber\\
	%&= \frac{1}{2}\mathbb{E}\left(Y_{i-1}^2\right)+\mathbb{E}(S_{i-1}Y_{i-1})\nonumber\\
	&= \frac{1}{2}\mathbb{E}\left(Y_{i-1}^2\right) + \mathbb{E}(S_{i-1})\mathbb{E}(Y_{i-1}).
	\end{align}
	Since we are interested in the steady-state behavior, we will drop the subscript index on the random variables. Hence,
	\begin{align}
	\label{eq:eq_MG11_no_preempt_Q}
	\mathbb{E}(Q)	&= \frac{1}{2}\mathbb{E}\left(Y^2\right) + \mathbb{E}(S)\mathbb{E}(Y)\nonumber\\
	&= \frac{1}{2}\mathbb{E}\left( (X+S)^2\right)+\mathbb{E}(S)\mathbb{E}(S+X)\nonumber\\
	%	&= \frac{1}{2}\mathbb{E}\left(X^2\right)+\frac{1}{2}\mathbb{E}\left(S^2\right)+\mathbb{E}(X)\mathbb{E}(S)+\mathbb{E}(S)^2\nn
	%	&+\mathbb{E}(S)\mathbb{E}(X)\nonumber\\
	&= \frac{1}{2}\left(\mathbb{E}\left(X^2\right)+\mathbb{E}\left(S\right)^2\right)+\frac{1}{2}\mathrm{Var}(S)+2\mathbb{E}(S)\mathbb{E}(X)\nn
	&+\mathbb{E}(S)^2\nonumber\\
	&= \frac{1}{2}\left(\mathbb{E}\left(S\right)^2+\mathrm{Var}(S)\right)+\mathbb{E}\left(X\right)^2+2\mathbb{E}(S)\mathbb{E}(X)\nn
	&+\mathbb{E}(S)^2\nonumber\\
	&= \left(\mathbb{E}(X)+\mathbb{E}(S)\right)^2 + \frac{1}{2}\left(\mathbb{E}(S)^2+\textrm{Var}(S)\right),
	\end{align}
	where the third equality is obtained by adding and subtracting $\frac{1}{2}\mathbb{E}(S)^2$ to the second equality, and the fourth equality is obtained by noticing that for the exponential random variable $X$ we have $\mathbb{E}\left(X^2\right)=2\mathbb{E}(X)^2$.	
	%After some algebraic manipulations we get,
	%\begin{equation*}
	%	\mathbb{E}(Q)	= \left(\mathbb{E}(X)+\mathbb{E}(S)\right)^2 + \frac{1}{2}\left(\mathbb{E}(S)^2+\textrm{Var}(S)\right).
	%\end{equation*}
	Using (\ref{eq:eq_eff_rate}) and (\ref{eq:eq_MG11_no_preempt_Q}), we get (\ref{eq:eq_no_preempt_age}).
\end{proof}

\subsection{Finding the optimal arrival rate}
When the arrival rate of the updates is a parameter that we can control, it is interesting to have an idea on its value that minimizes the average age. 
\begin{thm}
	\label{thm:thm_MG11_no_preempt_opt_rate}
	For the M/G/1/1 blocking system, the minimum average age $\Delta^*$ is achieved for:
	\begin{itemize}
		\item  If $C_S>1$, then $\lambda^*=\frac{\beta^*}{(1-\beta^*)\mathbb{E}(S)}$ with $\beta^*=\sqrt{\frac{2}{C_S+1}}$ and $$\Delta^*=\mathbb{E}(S)\sqrt{2(C_S+1)}$$
		\item If $C_S\leq 1$, $\lambda^*\to\infty$  and $\Delta^*=\mathbb{E}(S)\left(\frac{1}{2}(C_S+1)+1\right)$
	\end{itemize}	
\end{thm}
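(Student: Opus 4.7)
The plan is to reparametrize the average age in terms of $\beta$ alone and optimize over $\beta$, exploiting the fact that the map $\lambda \mapsto \beta = \lambda\mathbb{E}(S)/(\lambda\mathbb{E}(S)+1)$ is a strictly increasing bijection from $[0,\infty)$ onto $[0,1)$. Thus minimizing $\Delta$ over $\lambda\ge 0$ is equivalent to minimizing
\begin{equation*}
g(\beta) = \frac{\beta}{2}(C_S+1) + \frac{1}{\beta}
\end{equation*}
over $\beta\in(0,1)$, where $\Delta = \mathbb{E}(S)\,g(\beta)$ by Theorem~\ref{thm:thm_no_preempt_age}. Since this is a single-variable smooth convex-type function on an open interval, the heavy lifting is purely calculus.

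First, I would compute $g'(\beta)=\tfrac{1}{2}(C_S+1)-\beta^{-2}$ and solve $g'(\beta)=0$, obtaining the unique positive critical point $\hat\beta=\sqrt{2/(C_S+1)}$. A quick look at $g''(\beta)=2\beta^{-3}>0$ confirms $g$ is strictly convex on $(0,1)$, so any critical point in that interval is the global minimizer. The key discriminator is whether $\hat\beta<1$, which is equivalent to $C_S>1$; this is exactly the case split in the theorem statement.

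In the case $C_S>1$, I would set $\beta^*=\hat\beta$, invert the relation $\beta=\lambda\mathbb{E}(S)/(\lambda\mathbb{E}(S)+1)$ to recover $\lambda^*=\beta^*/((1-\beta^*)\mathbb{E}(S))$, and substitute into $g$ to get the minimum value. Both terms $\frac{\beta^*}{2}(C_S+1)$ and $1/\beta^*$ collapse to $\tfrac12\sqrt{2(C_S+1)}$, so the sum is $\sqrt{2(C_S+1)}$ and $\Delta^*=\mathbb{E}(S)\sqrt{2(C_S+1)}$, matching the claim.

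In the case $C_S\le 1$, the critical point $\hat\beta\ge 1$ falls outside the feasible interval, and $g'(\beta)<0$ for every $\beta\in(0,1)$ (since $\beta^{-2}>1\ge \tfrac{1}{2}(C_S+1)$), so $g$ is strictly decreasing on $(0,1)$. The infimum is therefore attained only in the limit $\beta\to 1^-$, which corresponds to $\lambda\to\infty$, and plugging $\beta=1$ into $g$ yields $\Delta^*=\mathbb{E}(S)\bigl(\tfrac{1}{2}(C_S+1)+1\bigr)$. The main obstacle, such as it is, is simply being careful with the boundary: the infimum in this regime is not attained by any finite $\lambda$ but is the limiting value, which should be explicitly noted for correctness.
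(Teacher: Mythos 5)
Your proof is correct and follows essentially the same route as the paper: reparametrize in $\beta$, set $g'(\beta)=0$ to find $\hat\beta=\sqrt{2/(C_S+1)}$, and split on whether $\hat\beta$ lies in $(0,1)$, i.e.\ whether $C_S>1$. The only addition you make is an explicit check of $g''>0$ to certify the critical point is a global minimizer, which is a slight strengthening the paper leaves implicit.
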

\begin{proof}
	Setting the derivative of \eqref{eq:eq_no_preempt_age} with respect to $\beta$ to zero, we get:
	\begin{equation}
	\label{eq:eq_deriv_no_preempt_age}
	{\beta^*}^2 = \frac{2}{C_S+1},
	\end{equation}
	where $\beta^*$ is the optimal value of $\beta$. Since $0\leq\beta^*=\frac{\rho^*}{\rho^*+1}<1$, $C_S$ has to be strictly bigger than $1$ for $\beta^*$ to exist. In this case, $\beta^*=\sqrt{\frac{2}{C_S+1}}$ and solving for $\lambda$ we get $\lambda^*=\frac{\beta^*}{(1-\beta^*)\mathbb{E}(S)}$. Using $\beta^*$ in \eqref{eq:eq_no_preempt_age} gives the value of the minimum age $\Delta^*$.
	
	If the service time distribution is such that $C_S\leq 1$, then $\frac{\partial\Delta}{\partial\beta}=-\frac{1}{\beta^2}+\frac{C_S+1}{2}<0$. However, $\frac{\partial\beta}{\partial\lambda}= \frac{\mathbb{E}(S)}{(\lambda\mathbb{E}(S)+1)^2}\geq 0$. Therefore, $\frac{\partial\Delta}{\partial\lambda}= \frac{\partial\Delta}{\partial\beta}\frac{\partial\beta}{\partial\lambda}<0$. Thus the average age is a strictly decreasing function of the arrival rate and the minimal average age is obtained as $\lambda\to\infty$.
\end{proof}

%%%%%%%%%%%%%%%%%%%% M/G/1/1 with blocking and ARQ %%%%%%%%%%%%%%%%%%%
%%%%%%%%%%%%%%%%%%%%%%%%%%%%%%%%%%%%%%%%%%%%%%%%%%%%%%%%%%%%%%%%%%%%%%%%%%%%%
\section{M/G/1/1 blocking HARQ system}
\label{sec:sec_MG11_no_preempt_arq}

Now, we study the effect of different HARQ policies on the average age when considering an M/G/1/1 queue without preemption. We assume that the updates are sent through a symbol erasure channel with erasure rate $\delta$. Moreover, two HARQ protocols are visited: the infinite incremental redundancy (IIR) and the fixed redundancy (FR). 
\subsection{Infinite Incremental Redundancy}
In this policy, an update consists of $k_s$ information symbols and is encoded using a rateless code. This means that the monitor needs to receive at least $k_s$ symbols in order to decode the update. The transmission of an update finishes whenever $k_s$ symbols are successfully transmitted. All updates arriving when the system is busy are discarded. Therefore, we define the  service time $S$ of an update as the number of channel uses needed for the monitor to receive $k_s$ symbols. Hence, $S$ is distributed as a negative binomial with $k_s$ successes and success probability $1-\delta$.
\begin{thm}
	\label{thm:thm_IIR_no_preemption}
	The average age of the M/G/1/1 blocking IIR-HARQ system is:
	\begin{equation}
	\label{eq:eq_IIR_age_no_preemption}
	\Delta_{\text{NIIR}}	= \frac{1}{\lambda}+\frac{k_s}{1-\delta}+\frac{\lambda k_s(k_s+\delta)}{2(1-\delta)(\lambda k_s+1-\delta)}.
	\end{equation}
	Moreover, the minimum average age is achieved for $\lambda\to\infty$ and its value is given by,
	\begin{equation}
	\label{eq:eq_min_age_IIR_no_preemption}
	\Delta_{\text{NIIR}}^* = \frac{3k_s+\delta}{2(1-\delta)}
	\end{equation}
\end{thm}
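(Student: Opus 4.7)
The plan is to apply Theorem~\ref{thm:thm_no_preempt_age} and Theorem~\ref{thm:thm_MG11_no_preempt_opt_rate} directly, with $S$ specialized to the negative binomial service time that arises in the IIR protocol. So the proof reduces to (i) computing the first two moments of $S$, (ii) substituting $\mathbb{E}(S)$, $C_S$, and $\beta$ into the formula of Theorem~\ref{thm:thm_no_preempt_age}, and (iii) checking the regime $C_S\le 1$ in Theorem~\ref{thm:thm_MG11_no_preempt_opt_rate}.

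First I would note that $S$ is the number of channel uses needed to collect $k_s$ unerased symbols over a memoryless erasure channel with symbol success probability $1-\delta$, so $S$ is negative binomial with parameters $(k_s,\,1-\delta)$. From standard formulas, $\mathbb{E}(S)=k_s/(1-\delta)$ and $\mathrm{Var}(S)=k_s\delta/(1-\delta)^2$, which immediately gives the squared coefficient of variation
\[
C_S = \frac{\mathrm{Var}(S)}{\mathbb{E}(S)^2} = \frac{\delta}{k_s}.
\]
Next, from the definitions, $\rho = \lambda\mathbb{E}(S) = \lambda k_s/(1-\delta)$ and hence
\[
\beta = \frac{\rho}{\rho+1} = \frac{\lambda k_s}{\lambda k_s + 1-\delta},\qquad \frac{1}{\beta} = \frac{\lambda k_s + 1-\delta}{\lambda k_s}.
\]

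Plugging these into $\Delta=\mathbb{E}(S)\bigl(\tfrac{\beta}{2}(C_S+1)+\tfrac{1}{\beta}\bigr)$ from Theorem~\ref{thm:thm_no_preempt_age}, the $\tfrac{1}{\beta}$ term contributes $\mathbb{E}(S)/\beta = k_s/(1-\delta) + 1/\lambda$, and the $\tfrac{\beta}{2}(C_S+1)$ term contributes $\lambda k_s(k_s+\delta)/\bigl(2(1-\delta)(\lambda k_s+1-\delta)\bigr)$ after using $C_S+1=(k_s+\delta)/k_s$. Summing these two pieces yields exactly \eqref{eq:eq_IIR_age_no_preemption}. This step is routine algebra; the only thing to be careful about is tracking the $\mathbb{E}(S)$ factor through both terms.

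Finally, for the optimal arrival rate, observe that $C_S = \delta/k_s \le 1$ since $\delta\in[0,1]$ and $k_s\ge 1$. Thus we are in the second regime of Theorem~\ref{thm:thm_MG11_no_preempt_opt_rate}, so the infimum average age is attained as $\lambda\to\infty$ with value $\mathbb{E}(S)\bigl(\tfrac{1}{2}(C_S+1)+1\bigr)$. Substituting $\mathbb{E}(S)=k_s/(1-\delta)$ and $C_S=\delta/k_s$ gives
\[
\Delta_{\text{NIIR}}^* = \frac{k_s}{1-\delta}\Bigl(\frac{k_s+\delta}{2k_s}+1\Bigr) = \frac{3k_s+\delta}{2(1-\delta)},
\]
which is \eqref{eq:eq_min_age_IIR_no_preemption}. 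The only conceptual step in the whole argument is the observation $C_S=\delta/k_s\le 1$, which rules out the interior optimum from Theorem~\ref{thm:thm_MG11_no_preempt_opt_rate}; everything else is bookkeeping, so I do not anticipate any real obstacle.
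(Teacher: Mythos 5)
Your proof is correct and follows essentially the same route as the paper: identify $S$ as negative binomial $(k_s,1-\delta)$, compute $\mathbb{E}(S)$, $\mathrm{Var}(S)$, $\rho$, $\beta$, $C_S$, substitute into Theorem~\ref{thm:thm_no_preempt_age}, and then invoke the $C_S\le 1$ branch of Theorem~\ref{thm:thm_MG11_no_preempt_opt_rate}. The only cosmetic difference is that you substitute into the closed-form $\Delta^*=\mathbb{E}(S)(\tfrac12(C_S+1)+1)$ from Theorem~\ref{thm:thm_MG11_no_preempt_opt_rate}, while the paper takes $\lambda\to\infty$ directly in \eqref{eq:eq_IIR_age_no_preemption}; these are equivalent.
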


\begin{proof}
	Since we are using IIR policy then the service time $S$ of each update is distributed as a negative binomial $(k_s,1-\delta)$, $S\in\{k_s,k_s+1,\dots\}$. In this case the mean and variance  of $S$ are  given by:
	\begin{equation}
	\mathbb{E}(S)	= \frac{k_s}{1-\delta},\quad \mathrm{Var}(S)	= \frac{k_s\delta}{(1-\delta)^2}.
	\end{equation}
	Hence, we compute the quantities $\rho$, $\beta$ and $C_S$ present in (\ref{eq:eq_no_preempt_age}):
	\begin{equation}
	\rho	= \frac{\lambda k_s}{1-\delta},\quad \beta	= \frac{\rho}{\rho+1} = \frac{\lambda k_s}{\lambda k_s+1-\delta},\quad C_S	= \frac{\delta}{k_s}.
	\end{equation}
	Using the above expression in (\ref{eq:eq_no_preempt_age}) and performing some simplifications we get \eqref{eq:eq_IIR_age_no_preemption}.
	
	Moreover, since $\delta\leq 1$ and $k_s\geq 1$, $C_S = \frac{\delta}{k_s}\leq 1$. By Theorem~\ref{thm:thm_MG11_no_preempt_opt_rate}, the optimum average age is achieved as $\lambda\to\infty$. Taking the limit on \eqref{eq:eq_IIR_age_no_preemption} gives \eqref{eq:eq_min_age_IIR_no_preemption}.
\end{proof}

\subsection{Fixed Redundancy}
In this policy, we apply two levels of coding: a packet level and a physical level. Each update consists of $k_p$ packets encoded using a rateless code. This means that the monitor needs to receive $k_p$ decodable packets in order to decode the update. Moreover, each packet contains $k_s$ information symbols and is encoded using a $(n_s,k_s)$-Maximum Distance Separable (MDS) code. Hence, a packet can be decoded if at least $k_s$ symbols are not erased. Since the packets are being transmitted through a symbol erasure channel with erasure probability $\delta$ than the probability for the receiver to be unable to decode a packet is:
\begin{align}
\label{eq:eq_ep}
\epsilon_p	&= \mathbb{P}(\text{less than $k_s$ symbols received})\nn
&= \sum_{i=0}^{k_s-1}{{n_s}\choose{i}}\delta^{n_s-i}(1-\delta)^i.
\end{align}
\begin{thm}
	\label{thm:thm_FR_no_preemption}
	%Each update consists of $k_p$ packets encoded using a rateless code. Moreover, each packet contains $k_s$ information symbols and is encoded using a $(n_s,k_s)$-MDS code.
	The average age of the M/G/1/1 FR-HARQ blocking system is
	\begin{equation}
	\label{eq:eq_FR_age_no_preemption}
	\Delta_{\text{NFR}}	= \frac{1}{\lambda}+\frac{n_sk_p}{1-\epsilon_p}+\frac{\lambda n_s^2k_p(k_p+\epsilon_p)}{2(1-\epsilon_p)(\lambda n_sk_p+1-\epsilon_p)}.
	\end{equation}
	
	Moreover, the minimum average age is achieved as $\lambda\to\infty$ and its value is given by,
	\begin{equation}
	\label{eq:eq_min_age_FR_no_preemption}
	\Delta_{\text{NFR}}^* = \frac{3n_sk_p+\epsilon_p}{2(1-\epsilon_p)}
	\end{equation}
\end{thm}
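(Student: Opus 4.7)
The plan is to mirror the proof of Theorem~\ref{thm:thm_IIR_no_preemption}, using the FR service-time distribution in place of the IIR one, and then invoke Theorems~\ref{thm:thm_no_preempt_age} and~\ref{thm:thm_MG11_no_preempt_opt_rate}. The first step is to identify the distribution of the service time $S$ under FR. Since the packet-layer rateless code terminates as soon as $k_p$ decodable packets arrive, and each packet occupies exactly $n_s$ channel uses (the receiver must wait for all $n_s$ coded symbols before declaring it decodable or erased), we can write $S = n_s N$, where $N$ is negative binomial with parameters $(k_p, 1-\epsilon_p)$ and $\epsilon_p$ is given by \eqref{eq:eq_ep}.

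The next step is purely mechanical: compute the moments
\begin{equation*}
\mathbb{E}(S)=\frac{n_s k_p}{1-\epsilon_p}, \qquad \mathrm{Var}(S)=\frac{n_s^2 k_p\epsilon_p}{(1-\epsilon_p)^2},
\end{equation*}
and derive the parameters appearing in \eqref{eq:eq_no_preempt_age}:
\begin{equation*}
\rho=\frac{\lambda n_s k_p}{1-\epsilon_p},\qquad \beta=\frac{\lambda n_s k_p}{\lambda n_s k_p+1-\epsilon_p},\qquad C_S=\frac{\epsilon_p}{k_p}.
\end{equation*}
Substituting these into \eqref{eq:eq_no_preempt_age} and collecting the three resulting terms (the constant, the $1/\beta$ term producing $\tfrac{1}{\lambda}+\tfrac{n_sk_p}{1-\epsilon_p}$, and the $\tfrac{\beta}{2}(C_S+1)$ term producing the last fraction) yields \eqref{eq:eq_FR_age_no_preemption}.

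For the optimality claim, the decisive observation is that $\epsilon_p\le 1$ and $k_p\ge 1$ imply $C_S=\epsilon_p/k_p\le 1$. The second branch of Theorem~\ref{thm:thm_MG11_no_preempt_opt_rate} therefore applies: $\Delta_{\text{NFR}}$ is strictly decreasing in $\lambda$ and the infimum is attained in the limit $\lambda\to\infty$. Taking this limit in \eqref{eq:eq_FR_age_no_preemption} kills $1/\lambda$ and sends the last fraction to $\tfrac{n_s(k_p+\epsilon_p)}{2(1-\epsilon_p)}$, which combined with $\tfrac{n_sk_p}{1-\epsilon_p}$ gives \eqref{eq:eq_min_age_FR_no_preemption}.

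I do not anticipate a conceptual obstacle: the only non-trivial modelling step is recognising that the packet-layer process sees a Bernoulli$(1-\epsilon_p)$ erasure channel on blocks of $n_s$ symbols, so $S$ factors cleanly as $n_s N$; everything else is routine algebra and a direct appeal to the two previous theorems. The most error-prone piece will be the algebraic consolidation into the exact form stated in \eqref{eq:eq_FR_age_no_preemption}, which is worth double-checking term by term.
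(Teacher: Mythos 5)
Your proposal is correct and follows essentially the same route as the paper's own proof: identify $S=n_sM$ with $M$ negative binomial $(k_p,1-\epsilon_p)$, compute $\mathbb{E}(S)$, $\mathrm{Var}(S)$, and hence $\rho$, $\beta$, $C_S$, substitute into \eqref{eq:eq_no_preempt_age}, and invoke Theorem~\ref{thm:thm_MG11_no_preempt_opt_rate} for the $\lambda\to\infty$ limit. (Your $\rho=\lambda n_sk_p/(1-\epsilon_p)$ and $\beta=\lambda n_sk_p/(\lambda n_sk_p+1-\epsilon_p)$ actually fix an apparent typo in the paper's intermediate display, which drops the $n_s$; your versions are the ones consistent with $\rho=\lambda\mathbb{E}(S)$ and with the final formula \eqref{eq:eq_FR_age_no_preemption}.)

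One small point worth flagging rather than glossing over: carrying out the limit step you describe actually gives
$$\frac{n_sk_p}{1-\epsilon_p}+\frac{n_s(k_p+\epsilon_p)}{2(1-\epsilon_p)}=\frac{n_s\bigl(3k_p+\epsilon_p\bigr)}{2(1-\epsilon_p)},$$
which has an $n_s$ multiplying $\epsilon_p$; this does not literally match \eqref{eq:eq_min_age_FR_no_preemption} (which reads $3n_sk_p+\epsilon_p$ in the numerator). The theorem statement appears to contain a typo, and your derivation is the correct one; you should not assert that the algebra ``gives'' the printed formula without noting the missing factor of $n_s$.
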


\begin{proof}
	%Since the packets are being transmitted through a symbol erasure channel with erasure probability $\delta$ than the probability for the receiver to be unable to decode a packet is:
	%\begin{align}
	%\epsilon_p	&= \mathbb{P}(\text{less than $k_s$ symbols received})\nn
	%&= \sum_{i=0}^{k_s-1}{{n_s}\choose{i}}\delta^{n_s-i}(1-\delta)^i.
	%\end{align}
	The number $M$ of packets needed to be transmitted to decode an update is distributed as a negative binomial $(k_p, 1-\epsilon_p)$ random variable with $k_p$ successes and success rate $(1-\epsilon_p)$, $M\in\{k_p,k_p+1,\dots\}$. Since the transmission of each packet consumes $n_s$ channel uses then the service time $S$ of each update is $S=n_sM$. Thus, the mean and variance of $S$ are given by:
	\begin{align}
	\mathbb{E}(S)	&= \mathbb{E}(n_sM) = n_s\mathbb{E}(M)=\frac{n_sk_p}{1-\epsilon_p},\\ 
	\mathrm{Var}(S) &= \mathrm{Var}(n_sM)=n_s^2\mathrm{Var}(M)=\frac{n_s^2k_p\epsilon_p}{(1-\epsilon_p)^2}.
	\end{align}
	Hence, we compute the quantities:
	\begin{equation}
	\rho	= \frac{\lambda k_p}{1-\epsilon_p},\quad \beta	= \frac{\lambda k_p}{\lambda k_p+1-\epsilon_p},\quad C_S	= \frac{\epsilon_p}{k_p}.
	\end{equation}
	Using the above expressions in (\ref{eq:eq_no_preempt_age}) and performing some simplifications we get \eqref{eq:eq_FR_age_no_preemption}.
	
	Moreover, since $\epsilon_p\leq 1$ and $k_p\geq 1$, $C_S = \frac{\epsilon_p}{k_p}\leq 1$. By Theorem~\ref{thm:thm_MG11_no_preempt_opt_rate}, the optimum average age is achieved as $\lambda\to\infty$. From \eqref{eq:eq_FR_age_no_preemption} this yields \eqref{eq:eq_min_age_FR_no_preemption}.
\end{proof}

%%%%%%%%%%%% M/G/1/1 with preemption under general service time %%%%%%%%%%%
%%%%%%%%%%%%%%%%%%%%%%%%%%%%%%%%%%%%%%%%%%%%%%%%%%%%%%%%%%%%%%%%%%%%%%%%%%%%%
\section{M/G/1/1 with preemption}
\label{sec:sec_MG11_preempt_general}

%\subsection{Preliminaries}
In the M/G/1/1 with preemption scenario, any packet being served is preempted if a new packet arrives and the new packet is served instead. In fact, while in the M/G/1/1 with blocking the priority is given to the update being served, in this setup the priority goes to the newly generated update. Moreover, the number of packets in the queue can be modeled as a continuous-time two-state semi-Markov chain depicted in Figure~\ref{fig2}.

The 0-state corresponds to empty queue and no packet is being served while the 1-state corresponds to the state where the queue is full and is serving one packet. However, given that the interarrival time between packets is exponentially distributed with rate $\lambda$ then one spends an exponential amount of time $X$ in the 0-state before jumping with probability 1 to the other state. Once in the 1-state, two independent clocks are started: the service time clock of the packet being served and the rate $\lambda$ memoryless clock of the interarrival time between the current packet and the next one to be generated. If the memoryless clock ticks first, we stay in the 1-state, otherwise we go back to the 0-state. Hence, the jump from the 1-state to the 0-state occurs with probability $p = \mathbb{P}(S<X)$, where $S$ is a generic service time with distribution $f_S(t)$ and $X$ is a generic rate $\lambda$ memoryless interarrival time which is independent of $S$.

The quantity $p$ will play an important role in our derivation, so we will take a closer look at it:
\begin{align}
\label{eq1}
p &= \int_0^\infty\!f_S(t)\mathbb{P}(X>t)\ \mathrm{d}t
= \int_0^\infty\!f_S(t)e^{-\lambda t}\ \mathrm{d}t
= P_\lambda,
\end{align}
where $P_\lambda$ is the Laplace transform of the service time distribution.

\begin{figure}[!t]
	\centering
	\includegraphics[scale=0.3]{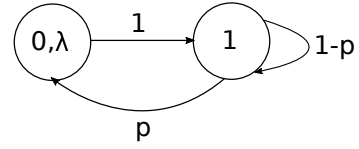}
	\caption{Semi-Markov chain representing the queue for LCFS with preemption}
	\label{fig2}
\end{figure}

Using Fig.~\ref{fig1} it was shown in \cite{NajmNasser-ISIT2016} that the average age $\Delta$ is:
\begin{align}
\label{eq2}
\Delta = \lambda_e\mathbb{E}(Q)=\lambda_e\left(\frac{1}{2}\mathbb{E}\left(Y^2\right)+\mathbb{E}(T)\mathbb{E}(Y)\right),
\end{align}
where $\lambda_e=\lambda P_\lambda$ is the effective arrival rate, $T$ and $Y$ as defined in Section~\ref{sec:sec_preliminaries}. We start with $\mathbb{E}(T)$.
\begin{figure}[!t]
	\centering
	\includegraphics[scale=0.5]{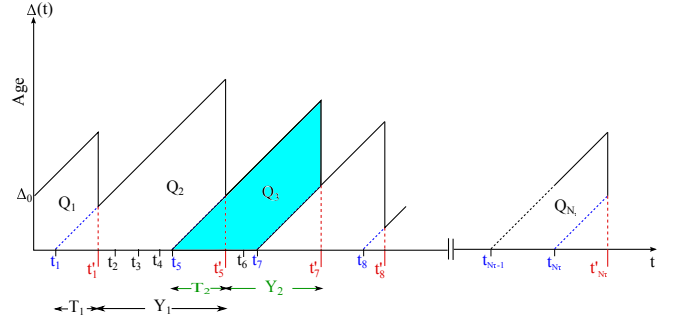}%
	\caption{Variation of the instantaneous age for LCFS with preemption}
	\label{fig1}
\end{figure}

\begin{lemma}
	\label{lemma:lemma_MG11_preempt_f(T)}
	The PDF of the system time $T$ is 
	\begin{align}
	\label{eq3}
	f_T(t)	&= \frac{f_S(t)}{P_\lambda}e^{-\lambda t}.
	\end{align}
	Its expected value is
	\begin{equation}
	\label{eq:eq_MG11_preempt_E(T)}
	\mathbb{E}(T)	= -\frac{1}{P_\lambda}\frac{\partial P_\lambda}{\partial \lambda}.
	\end{equation}
\end{lemma}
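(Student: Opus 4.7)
The plan is to recognize that a successful (non-preempted) update is precisely one whose service completes before the next Poisson arrival, so $T$ is the conditional random variable $(S \mid S < X)$, where $X \sim \text{Exp}(\lambda)$ is independent of $S$. This conditioning captures exactly the event (jumping from the 1-state back to the 0-state in the semi-Markov chain) that the update is the one delivered to the monitor.

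First I would apply the definition of conditional density:
\begin{equation*}
f_T(t) = f_{S\mid S<X}(t) = \frac{f_S(t)\,\mathbb{P}(X>t)}{\mathbb{P}(S<X)}.
\end{equation*}
Using $\mathbb{P}(X>t) = e^{-\lambda t}$ and $\mathbb{P}(S<X) = P_\lambda$ as computed in \eqref{eq1}, this immediately gives \eqref{eq3}.

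Next I would compute $\mathbb{E}(T) = \int_0^\infty t f_T(t)\,\mathrm{d}t = \frac{1}{P_\lambda}\int_0^\infty t f_S(t) e^{-\lambda t}\,\mathrm{d}t$, and observe that the integral is precisely $-\frac{\partial P_\lambda}{\partial \lambda}$, since differentiating $P_\lambda = \int_0^\infty f_S(t) e^{-\lambda t}\,\mathrm{d}t$ under the integral sign pulls down a factor of $-t$. This yields \eqref{eq:eq_MG11_preempt_E(T)}.

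I do not expect any serious obstacle here; the only technical point worth mentioning is justifying the interchange of differentiation and integration, which is standard since $f_S(t)e^{-\lambda t}$ and $t f_S(t) e^{-\lambda t}$ are both dominated by integrable functions on any compact $\lambda$-neighborhood (e.g.\ by $f_S(t)$ for any $\lambda$ bounded below by some $\lambda_0 > 0$, noting $t e^{-\lambda t} \le t e^{-\lambda_0 t}$ which is bounded). This lets me apply Leibniz's rule and close the argument.
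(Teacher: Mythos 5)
Your proposal is correct and follows essentially the same route as the paper: identify $T$ as $(S\mid S<X)$, write down the conditional density (you do this directly, the paper via a limiting $\epsilon$ argument, but the computation is identical), and then obtain $\mathbb{E}(T)$ by differentiating $P_\lambda$ under the integral sign. The added remark about justifying the interchange via domination is a sound touch the paper omits.
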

\begin{proof}
	\begin{align}
	\label{eq:eq_f(T)}
	f_T(t)	%&= \lim_{\epsilon\to 0} \frac{\mathbb{P}(t\leq T\leq t+\epsilon)}{\epsilon}\nonumber\\
	&= \lim_{\epsilon\to 0} \frac{\mathbb{P}(S\in [t,t+\epsilon]| S<X)}{\epsilon}\nonumber\\
	&= \lim_{\epsilon\to 0} \frac{\mathbb{P}(S\in[t,t+\epsilon])}{\epsilon P_\lambda}\mathbb{P}(S<X|S\in(t,t+\epsilon))\nonumber\\
	&= \frac{f_S(t)}{P_\lambda}\mathbb{P}(X>t) =\frac{f_S(t)}{P_\lambda}e^{-\lambda t}.
	\end{align}
	Using (\ref{eq3}) we calculate the expected value of $T$:
	\begin{align}
	\label{eq4}
	\mathbb{E}(T)	%&= \int_0^\infty t f_T(t)\ \mathrm{d}t\nonumber\\
	&= \frac{1}{P_\lambda}\int_0^\infty tf_S(t)e^{-\lambda t}\mathrm{d}t= -\frac{1}{P_\lambda}\frac{\partial P_\lambda}{\partial \lambda}.
	\end{align}
\end{proof}
%\subsection{Calculating $\mathbb{E}(T)$}
%\begin{lemma}
%	\label{lemma:lemma_MG11_preempt_E(T)}
%	The expected value of the system time $T$ is given by
%	\begin{equation}
%	\label{eq:eq_MG11_preempt_E(T)}
%	\mathbb{E}(T)	= -\frac{1}{P_\lambda}\frac{\partial P_\lambda}{\partial \lambda}.
%	\end{equation}
%\end{lemma}
%
%\begin{proof}
%	Using (\ref{eq3}) we calculate the expected value of $T$:
%	\begin{align}
%	\label{eq4}
%	\mathbb{E}(T)	%&= \int_0^\infty t f_T(t)\ \mathrm{d}t\nonumber\\
%	&= \frac{1}{P_\lambda}\int_0^\infty tf_S(t)e^{-\lambda t}\mathrm{d}t= -\frac{1}{P_\lambda}\frac{\partial P_\lambda}{\partial \lambda}.
%	\end{align}	
%\end{proof}
%\subsection{Calculating the moments of $Y$}
Now we only need to calculate the first and second moments of $Y$. For that we will derive its moment generating function.
\begin{lemma}
	\label{lemma:lemma_MG11_preempt_MGF(Y)}
	The moment generating function of the interdeparture time $Y$ is given by
	\begin{align}
	\label{eq:eq_MG11_preempt_MGF(Y)}
	\phi_Y(s)	&= \frac{\lambda P_{\lambda-s}}{\lambda P_{\lambda-s} -s},
	\end{align}
	where $P_{\lambda-s} = \int_0^\infty f_S(t)e^{-(\lambda -s)}\ \mathrm{d}t$.
\end{lemma}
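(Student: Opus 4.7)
The plan is to decompose $Y$ into an initial idle period plus a busy period, and then set up a recursion for the MGF of the busy period by conditioning on whether the first packet in service is preempted or not.

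First I would write $Y = X_0 + Y_1$, where $X_0 \sim \mathrm{Exp}(\lambda)$ is the wait in the $0$-state between the previous successful departure and the next arrival, and $Y_1$ is the time from the start of the first service (entering the $1$-state) until a successful departure. By independence of $X_0$ and $Y_1$ and the fact that $\phi_{X_0}(s) = \lambda/(\lambda-s)$, we get $\phi_Y(s) = \tfrac{\lambda}{\lambda-s}\,\phi_{Y_1}(s)$, so it suffices to show that $\phi_{Y_1}(s) = \tfrac{(\lambda-s)P_{\lambda-s}}{\lambda P_{\lambda-s}-s}$.

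To compute $\phi_{Y_1}(s)$, I would condition on the race between the service time $S$ of the packet currently in service and the next interarrival time $X \sim \mathrm{Exp}(\lambda)$ (independent of $S$). If $S \le X$ the packet departs successfully and $Y_1 = S$ on this event. If $X < S$ the packet is preempted at time $X$ and, by the memorylessness of the Poisson arrivals and the i.i.d.\ nature of service times, the residual time until the next successful departure is an independent copy $Y_1'$ of $Y_1$. This yields the renewal-type identity
\begin{equation*}
\phi_{Y_1}(s) = \mathbb{E}\!\left(e^{sS}\mathbf{1}_{S\le X}\right) + \mathbb{E}\!\left(e^{sX}\mathbf{1}_{X<S}\right)\phi_{Y_1}(s).
\end{equation*}

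The two expectations are evaluated by integrating against the joint density of $(S,X)$. The first is immediate: conditioning on $S=t$ gives $\mathbb{E}(e^{sS}\mathbf{1}_{S\le X}) = \int_0^\infty f_S(t)e^{st}e^{-\lambda t}\,\mathrm{d}t = P_{\lambda-s}$. The second requires a short calculation: conditioning on $X=x$ gives $\mathbb{E}(e^{sX}\mathbf{1}_{X<S}) = \lambda\int_0^\infty e^{-(\lambda-s)x}\bar F_S(x)\,\mathrm{d}x$, and an integration by parts (using $\bar F_S(0)=1$ and $\bar F_S(\infty)=0$) reduces this to $\lambda(1-P_{\lambda-s})/(\lambda-s)$. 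This manipulation of the tail integral into a Laplace transform of the density is the main technical step. Substituting both expressions into the identity and solving the resulting linear equation for $\phi_{Y_1}(s)$ yields $\phi_{Y_1}(s) = (\lambda-s)P_{\lambda-s}/(\lambda P_{\lambda-s}-s)$, and multiplying by $\phi_{X_0}(s) = \lambda/(\lambda-s)$ cancels the $(\lambda-s)$ factor to give the claimed formula \eqref{eq:eq_MG11_preempt_MGF(Y)}.
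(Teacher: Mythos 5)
Your proof is correct. Both you and the paper start from the same decomposition $Y = X_0 + (\text{busy period})$, and the key technical step --- rewriting $\mathbb{E}(e^{sX}\mathbf{1}_{X<S})=\lambda\int_0^\infty e^{-(\lambda-s)x}\bar F_S(x)\,\mathrm{d}x$ via integration by parts into $\lambda(1-P_{\lambda-s})/(\lambda-s)$ --- is identical. The difference is in how the geometric number of preemptions is handled. The paper explicitly constructs the busy period as $W=\sum_{j=0}^M X'_j + T$ with $M\sim\mathrm{Geom}(P_\lambda)$, computes the conditional MGFs $\phi_{X'}(s)=\lambda(1-P_{\lambda-s})/\bigl((\lambda-s)(1-P_\lambda)\bigr)$ and $\phi_T(s)=P_{\lambda-s}/P_\lambda$, and then sums the geometric series $\phi_W=\phi_T\sum_i\bigl(\phi_{X'}(1-P_\lambda)\bigr)^iP_\lambda$. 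You instead set up the renewal fixed-point identity $\phi_{Y_1}=\mathbb{E}(e^{sS}\mathbf{1}_{S\le X})+\mathbb{E}(e^{sX}\mathbf{1}_{X<S})\phi_{Y_1}$ and solve a linear equation. These are algebraically equivalent (your truncated expectations are exactly the paper's conditional MGFs scaled by $P_\lambda$ and $1-P_\lambda$ respectively), but your version is somewhat cleaner: it avoids introducing the normalized conditional densities $f_{X'}$ and $f_T$ and the explicit geometric sum, instead leaning on memorylessness to justify that the post-preemption remainder is an i.i.d.\ copy of $Y_1$. Either route is fine; the fixed-point formulation is the more standard renewal-theoretic phrasing.
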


\begin{proof}
	From Fig.~\ref{fig1} we can deduce that $Y$ is the shortest time to go from the 0-state back to the 0-state. This means that
	\begin{equation}
	\label{eq5}
	Y = X + W,
	\end{equation}
	where  $X$ is exponentially distributed with rate $\lambda$ and $W$ is 
	\begin{align}
	\label{eq6}
	W &= \left\{\begin{array}{ll}
	T & \text{with probability\ } p\\
	X'_1+T & \text{with probability\ } (1-p)p\\
	X'_1+X'_2+T & \text{with probability\ } (1-p)^2p\\
	\vdots
	\end{array}\right.\nonumber\\
	&= \sum_{j=0}^M X'_j+T,
	\end{align}
	where $X'_0=0$ and for $j>0$, $X'_j$ is such that $\mathbb{P}(X'_j<\alpha)=\mathbb{P}(X<\alpha|X<S)$. $M$, which gives the number of discarded packets before the first successful reception, is a geometric$(p)$ random variable independent of $X'_j$ and $T$.
	We start first by deriving the moment generating function of $X'$.
	\begin{align}
	f_{X'}(t)	%&= \lim_{\epsilon\to 0} \frac{\mathbb{P}(t\leq X'\leq t+\epsilon)}{\epsilon}\nn
	&= \lim_{\epsilon\to 0} \frac{\mathbb{P}(X\in [t,t+\epsilon]| S>X)}{\epsilon}\nn
	&= \lim_{\epsilon\to 0} \frac{\mathbb{P}{X\in[t,t+\epsilon])}}{\epsilon (1-P_\lambda)}\mathbb{P}(S>X|X\in(t,t+\epsilon))\nn
	&= \frac{f_X(t)}{1-P_\lambda}\mathbb{P}(S>t)\nn
	f_{X'}(t)	&= \left[1-F_S(t)\right]\frac{\lambda e^{-\lambda t}}{1-P_\lambda},
	\end{align}
	where $F_S(t)$ is the cdf of the service time $S$.
	Hence,
	\begin{align}
	\label{eq7}
	\phi_{X'}(s)	&= \mathbb{E}\left(e^{sX'}\right) = \int_0^\infty e^{st} \left(1-F_S(t)\right)\frac{\lambda e^{-\lambda t}}{1-P_\lambda}\ \mathrm{d}t\nonumber\\
	%&= \frac{1}{1-P_\lambda}\int_0^\infty \lambda e^{-\lambda t}e^{st}\ \mathrm{d}t\nonumber\\
	%&\overset{(a)}{=} \frac{\lambda}{1-P_\lambda}\left(\left[(1-F_S(t))\frac{e^{-t(\lambda-s)}}{s-\lambda}\right]_{t=0}^\infty+\int_0^\infty f_S(t)\frac{e^{-t(\lambda-s)}}{s-\lambda}\mathrm{d}t\right)\nn
	%&- \frac{\lambda}{1-P_\lambda}\int_0^\infty F_S(t) e^{-t(\lambda-s)}\ \mathrm{d}t\nonumber\\
	&\overset{(a)}{=} \frac{\lambda}{\lambda-s}\frac{1}{1-P_\lambda} - \frac{\lambda}{1-P_\lambda}\frac{P_{\lambda-s}}{\lambda-s}\nonumber\\
	&= \frac{\lambda(1-P_{\lambda-s})}{(\lambda-s)(1-P_\lambda)},
	\end{align}
	where $(a)$ is obtained by using integration by parts with $u=1-F_S(t)$ and $\frac{\mathrm{d}v}{\mathrm{d}t}=e^{-t(\lambda-s)}$.
	On the other hand, \eqref{eq3} implies
	\begin{align}
	\label{eq8}
	\phi_T(s)	&= \mathbb{E}\left(e^{sT}\right)
	%&= \int_0^\infty f_T(t) e^{st}\ \mathrm{d}t\nonumber\\
	= \int_0^\infty \frac{f_S(t)}{P_\lambda}e^{-\lambda t}e^{st}\ \mathrm{d}t
	%&= \frac{1}{P_\lambda}\int_0^\infty f_S(t) e^{-t(\lambda-s)}\ \mathrm{d}t\nonumber\\
	= \frac{P_{\lambda-s}}{P_\lambda}.
	\end{align}
	Using (\ref{eq7}) and (\ref{eq8}), we deduce the moment generating of $W$,
	\begin{align}
	\label{eq9}
	\phi_W(s)	&= \mathbb{E}\left(e^{s\left(\sum_{i=0}^M X'_i  + T \right)}\right)\nonumber\\
	&= \mathbb{E}\left(e^{sT}\right)\mathbb{E}\left(\mathbb{E}\left(e^{sX'}\right)^M\right)\nonumber\\
	&= \frac{P_{\lambda-s}}{P_\lambda} \sum_{i=0}^\infty \left(\frac{\lambda(1-P_{\lambda-s})}{(\lambda-s)(1-P_\lambda)}\right)^i(1-P_\lambda)^iP_\lambda\nonumber\\
	%&= \frac{P_{\lambda-s}}{P_\lambda} P_\lambda \frac{1}{1-\frac{\lambda(1-P_{\lambda-s})}{\lambda-s}}\nonumber\\
	&= \frac{(\lambda-s)P_{\lambda-s}}{\lambda P_{\lambda-s}-s}.
	\end{align}
	Using (\ref{eq9}) and that $\phi_X=\mathbb{E}\left(e^{sX}\right)=\frac{\lambda}{\lambda-s}$, we get \eqref{eq:eq_MG11_preempt_MGF(Y)} from $\phi_Y(s)=\mathbb{E}\left(e^{sX}\right)\mathbb{E}\left(e^{sW}\right)$.
	%\begin{align}
	%	\label{eq10}
	%	\phi_Y(s)	&= \mathbb{E}\left(e^{sX}\right)\mathbb{E}\left(e^{sW}\right)
	%		%	&= \frac{\lambda}{\lambda-s}\frac{(\lambda-s)P_{\lambda-s}}{\lambda P_{\lambda-s}-s}\nonumber\\
	%			= \frac{\lambda P_{\lambda-s}}{\lambda P_{\lambda-s} -s}.
	%\end{align}
\end{proof}
%\subsection{The average age}
\begin{thm}
	\label{thm:thm_MG11_preempt_general_age}
	The average age of an M/G/1/1 system with preemption is given by,
	\begin{equation}
	\label{eq13}
	\Delta	= \lambda_e\mathbb{E}(Q) = \frac{1}{\lambda P_\lambda}.
	\end{equation}
\end{thm}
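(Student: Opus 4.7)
The plan is to compute the first two moments of $Y$ by differentiating the moment generating function from Lemma~\ref{lemma:lemma_MG11_preempt_MGF(Y)} and then substitute everything into the identity $\Delta = \lambda_e\bigl(\tfrac{1}{2}\mathbb{E}(Y^2) + \mathbb{E}(T)\mathbb{E}(Y)\bigr)$ given in \eqref{eq2}. The key ingredients I can assume are $\lambda_e = \lambda P_\lambda$, $\mathbb{E}(T) = -\frac{1}{P_\lambda}\frac{\partial P_\lambda}{\partial \lambda}$ from \eqref{eq:eq_MG11_preempt_E(T)}, and $\phi_Y(s) = \frac{\lambda P_{\lambda-s}}{\lambda P_{\lambda-s}-s}$ from \eqref{eq:eq_MG11_preempt_MGF(Y)}.

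To streamline the differentiation, I would rewrite the MGF as $\phi_Y(s) = 1/\bigl(1-h(s)\bigr)$ where $h(s) = s/q(s)$ and $q(s) = \lambda P_{\lambda-s}$. Writing $P = P_\lambda$ and $P' = \partial P_\lambda/\partial\lambda$, the chain rule gives $q(0) = \lambda P$ and $q'(0) = -\lambda P'$ (because differentiating with respect to $s$ in $P_{\lambda-s}$ brings out a factor of $-1$). From $\phi_Y(s) = \sum_{n\geq 0} h(s)^n$ near $s=0$ (since $h(0)=0$), one reads off $\mathbb{E}(Y) = h'(0)$ and $\mathbb{E}(Y^2) = h''(0) + 2h'(0)^2$. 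Computing $h'$ and $h''$ by the quotient rule and evaluating at $s=0$ should yield
\begin{equation*}
\mathbb{E}(Y) = \frac{1}{\lambda P}, \qquad \mathbb{E}(Y^2) = \frac{2P'}{\lambda P^2} + \frac{2}{\lambda^2 P^2}.
\end{equation*}
As a sanity check, the first identity is consistent with $\lambda_e = 1/\mathbb{E}(Y) = \lambda P$.

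The final step is the substitution into \eqref{eq2}. Using $\mathbb{E}(T) = -P'/P$, the cross-term is $\mathbb{E}(T)\mathbb{E}(Y) = -P'/(\lambda P^2)$, which exactly cancels the $P'$-term coming from $\tfrac{1}{2}\mathbb{E}(Y^2) = P'/(\lambda P^2) + 1/(\lambda^2 P^2)$. What remains is $1/(\lambda^2 P^2)$, and multiplying by $\lambda_e = \lambda P$ gives $\Delta = 1/(\lambda P_\lambda)$ as claimed.

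The main obstacle is purely bookkeeping in the differentiation of $\phi_Y$; the pleasant surprise is the cancellation between $\mathbb{E}(T)\mathbb{E}(Y)$ and the derivative term appearing in $\mathbb{E}(Y^2)$, which makes the final expression depend only on $P_\lambda$ and not on $\partial P_\lambda/\partial\lambda$. If the quotient rule gets unwieldy, an alternative is to differentiate $(\lambda P_{\lambda-s} - s)\phi_Y(s) = \lambda P_{\lambda-s}$ implicitly once and twice at $s=0$, which avoids handling a triple product and recovers the same two moments with less algebra.
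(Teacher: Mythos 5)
Your proof is correct and follows essentially the same route as the paper: differentiate $\phi_Y(s)$ to obtain $\mathbb{E}(Y)=\tfrac{1}{\lambda P_\lambda}$ and $\mathbb{E}(Y^2)=\tfrac{2}{\lambda^2 P_\lambda^2}\bigl(1+\lambda\,\partial P_\lambda/\partial\lambda\bigr)$, substitute into \eqref{eq2} together with $\mathbb{E}(T)$ from Lemma~\ref{lemma:lemma_MG11_preempt_f(T)}, and observe that the $\partial P_\lambda/\partial\lambda$ terms cancel to leave $\mathbb{E}(Q)=\tfrac{1}{\lambda^2 P_\lambda^2}$. Rewriting $\phi_Y=1/(1-h)$ and using the geometric-series expansion is just a convenient way to organize the differentiation, not a genuinely different argument.
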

\begin{proof}
	Deriving \eqref{eq:eq_MG11_preempt_MGF(Y)} once and twice and setting $s=0$ gives:
	\begin{equation}
	\label{eq11}
	\mathbb{E}(Y)	= \frac{1}{\lambda P_\lambda}  \quad \text{and}\quad
	\mathbb{E}(Y^2)	= \frac{2}{\lambda^2 P_\lambda^2}\left(1+\lambda \frac{\partial P_\lambda}{\partial\lambda}\right)
	\end{equation}
	Using \eqref{eq:eq_MG11_preempt_E(T)} and (\ref{eq11}) we get $\mathbb{E}(Q) 	= \frac{1}{\lambda^2P_\lambda^2}$.
	%	\begin{equation}
	%	\label{eq:eq_MG11_preempt_E(Q)}
	%	\mathbb{E}(Q) 	= \frac{1}{\lambda^2P_\lambda^2}.
	%	\end{equation}
	This last expression and the fact that $\lambda_e=\lambda P_\lambda$ give \eqref{eq13}.
\end{proof}
%The average peak age would also be,
%\begin{equation}
%	\label{eq14}
%	\mathbb{E}(P)	= \mathbb{E}(T) + \mathbb{E}(Y) = -\frac{1}{P_\lambda}\frac{\partial P_\lambda}{\partial \lambda}+ \frac{1}{\lambda P_\lambda}
%\end{equation}

In conclusion, for the M/G/1/1 with preemption, the average age depends on the Laplace transform of the service time distribution.

%%%%%%%%%%%%%%%%%%%%%%%%%%%% M/G/1/1 with preemption and ARQ %%%%%%%%%%%%%
%%%%%%%%%%%%%%%%%%%%%%%%%%%%%%%%%%%%%%%%%%%%%%%%%%%%%%%%%%%%%%%%%%%%%%%%%%%%
\section{M/G/1/1 with preemption and HARQ}
\label{sec:sec_MG11_preempt_arq}

In this Section~we study the effect of different HARQ policies on the average age when considering an M/G/1/1 queue with preemption. Indeed, we assume that the updates are sent through a symbol erasure channel with erasure rate $\delta$. Moreover, two HARQ models are visited: the infinite incremental redundancy (IIR) and the fixed redundancy (FR). 
\subsection{Infinite Incremental Redundancy}
In this setup, the transmission of an update finishes whenever one of these events happen first: $(i)$ $k_s$ symbols are successfully transmitted, or $(ii)$ a new update is generated. Hence the following theorem.
\begin{thm}
	\label{thm:thm_IIR_preemption}
	%	Assume a system consists of an M/G/1/1 queue with preemption where  updates are generated according to a Poisson process with rate $\lambda$. Updates are transmitted through a symbol erasure channel with erasure rate $\delta$ according to the IIR policy. The average age is then given by:
	The average age of an M/G/1/1 with preemption system when using the IIR policy is given by,
	\begin{equation}
	\label{eq:eq_IIR_age_preemption}
	\Delta_{\text{PIIR}}	= \frac{1}{\lambda}\left(\frac{e^{\lambda}-\delta}{1-\delta}\right)^{k_s}.
	\end{equation}
	Moreover, $\Delta_{\text{PIIR}}$ has a minimum and the arrival rate $\lambda^*$ that achieves it should satisfy the condition
	\begin{equation}
	\label{eq:eq_MG11_with_preempt_IIR_cond}
	\lambda^*\leq \frac{1}{k_s}.
	\end{equation}
	The minimum age $\Delta_{\text{PIIR}}^*$ can be lower bounded using
	\begin{equation}
	\label{eq:eq_MG11_with_preemption_IIR_age_min}
	\Delta_{\text{PIIR}}^* \geq \frac{1}{\lambda_{\text{IIR}}}\left(1+\frac{\lambda_{\text{IIR}}}{1-\delta}\right)^{k_s},
	\end{equation}
	where $\lambda^*\approx\lambda_{\text{IIR}}= \frac{1-k_s+\sqrt{(k_s+1)^2-4k_s\delta}}{2k_s}$.
\end{thm}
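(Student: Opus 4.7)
The plan is to apply Theorem~\ref{thm:thm_MG11_preempt_general_age}, which reduces the problem to computing the Laplace transform $P_\lambda = \mathbb{E}(e^{-\lambda S})$ of the service time. Under IIR, $S$ is the number of channel uses needed to collect $k_s$ unerased symbols, hence negative binomial with parameters $(k_s,1-\delta)$ supported on $\{k_s,k_s+1,\dots\}$ with $\mathbb{P}(S=n) = \binom{n-1}{k_s-1}(1-\delta)^{k_s}\delta^{n-k_s}$. I would sum the resulting series by factoring out $(1-\delta)^{k_s}e^{-\lambda k_s}$ and recognizing a negative-binomial generating function in $\delta e^{-\lambda}$, obtaining $P_\lambda = \bigl((1-\delta)/(e^\lambda-\delta)\bigr)^{k_s}$. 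Substituting into $\Delta = 1/(\lambda P_\lambda)$ yields \eqref{eq:eq_IIR_age_preemption} immediately.

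For the existence and location of the optimizer, I would differentiate the logarithm of $\Delta_{\text{PIIR}}$, giving $\tfrac{d}{d\lambda}\ln\Delta_{\text{PIIR}} = -1/\lambda + k_s e^\lambda/(e^\lambda-\delta)$. Setting this to zero produces the transcendental optimality condition $e^{\lambda^*}(1-k_s\lambda^*) = \delta$. Since $\delta\geq 0$ and $e^{\lambda^*}>0$, the factor $1-k_s\lambda^*$ must be nonnegative, which is exactly \eqref{eq:eq_MG11_with_preempt_IIR_cond}. Existence of a minimum follows from $\Delta_{\text{PIIR}}(\lambda)\to\infty$ both as $\lambda\to 0^+$ (from the $1/\lambda$ factor) and as $\lambda\to\infty$ (from the $e^{\lambda k_s}$ factor), together with the continuity of $\Delta_{\text{PIIR}}$ on $(0,\infty)$.

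For the bound \eqref{eq:eq_MG11_with_preemption_IIR_age_min}, I would exploit the elementary inequality $e^\lambda \geq 1+\lambda$ inside \eqref{eq:eq_IIR_age_preemption} to obtain the pointwise bound $\Delta_{\text{PIIR}}(\lambda) \geq \tfrac{1}{\lambda}\bigl(1+\lambda/(1-\delta)\bigr)^{k_s}$, valid in particular at $\lambda = \lambda^*$. To turn this into a closed-form expression, I would substitute the same linearization $e^\lambda\approx 1+\lambda$ into the optimality condition, converting it into the quadratic $k_s\lambda^2 + (k_s-1)\lambda + (\delta-1) = 0$; recognizing the discriminant as $(k_s-1)^2 + 4k_s(1-\delta) = (k_s+1)^2 - 4k_s\delta$ and selecting the positive root gives the announced expression for $\lambda_{\text{IIR}}$. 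Combining the approximation $\lambda^*\approx\lambda_{\text{IIR}}$ with the pointwise lower bound then delivers \eqref{eq:eq_MG11_with_preemption_IIR_age_min}.

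The main obstacle is that the optimality equation is transcendental and has no closed-form solution, so $\lambda^*$ cannot be stated exactly. The inequality $e^\lambda \geq 1+\lambda$ is rigorous and yields a genuine pointwise lower bound, but the replacement of $\lambda^*$ by $\lambda_{\text{IIR}}$ in the final expression relies on the same first-order linearization used to derive $\lambda_{\text{IIR}}$ and should be read as an approximation rather than a strict inequality. Everything else reduces to an algebraic summation of the negative-binomial series and standard differentiation.
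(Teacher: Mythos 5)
Your proposal is correct and follows essentially the same route as the paper: reduce to $\Delta = 1/(\lambda P_\lambda)$ via Theorem~\ref{thm:thm_MG11_preempt_general_age}, compute $P_\lambda$ for the negative binomial service time, differentiate to obtain the transcendental condition $e^{\lambda^*}(1-k_s\lambda^*)=\delta$ (equivalently $e^{\lambda^*}(k_s\lambda^*-1)=-\delta$), read off $\lambda^*\leq 1/k_s$ from $\delta\in[0,1]$, apply $e^\lambda\geq 1+\lambda$ for the pointwise lower bound, and linearize the optimality condition into the quadratic $k_s\lambda^2+(k_s-1)\lambda+(\delta-1)=0$ to get $\lambda_{\text{IIR}}$. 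The only cosmetic differences are that you sum the negative-binomial series explicitly rather than quoting its MGF, you differentiate $\ln\Delta_{\text{PIIR}}$ instead of $\Delta_{\text{PIIR}}$ directly, and you supply an explicit boundary argument ($\Delta\to\infty$ as $\lambda\to 0^+$ and as $\lambda\to\infty$) for existence of the minimizer, which the paper leaves implicit; your caveat that substituting $\lambda_{\text{IIR}}$ for $\lambda^*$ turns the strict inequality into an approximation is also accurate and matches the paper's intent.
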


\begin{proof}
	Under the IIR policy, the service time $S$ of each update is distributed as a negative binomial $(k_s,1-\delta)$, $S\in\{k_s,k_s+1,\dots\}$. In this case the moment generating function of $S$ is given by:
	\begin{equation}
	\label{eq:eq_mgf_NB}
	\phi_S(s) = \mathbb{E}\left(e^{sS}\right) = \left(\frac{1-e^{s}\delta}{e^{s}(1-\delta)}\right)^{-k_s}.
	\end{equation}
	Noting that $P_\lambda=\phi_S(-\lambda)$ and using (\ref{eq13}) and (\ref{eq:eq_mgf_NB}), we get (\ref{eq:eq_IIR_age_preemption}).
	To prove condition \eqref{eq:eq_MG11_with_preempt_IIR_cond} we differentiate $\Delta_{\text{PIIR}}$ with respect to $\lambda$ and equate it to zero. This yields
	%	\begin{align}
	%	\label{eq:eq_MG11_age_deriv}
	%	&\frac{\partial\Delta_{PIIR}}{\partial\lambda}	=0\nn &-\frac{1}{\lambda^2}\left(\frac{e^\lambda-\delta}{1-\delta}\right)^{k_s}+\frac{k_s}{\lambda}\left(\frac{e^\lambda-\delta}{1-\delta}\right)^{k_s-1}\frac{e^\lambda}{1-\delta}	=0\nn
	%	&\frac{1}{\lambda}\left(\frac{e^\lambda-\delta}{1-\delta}\right)^{k_s-1}\left(-\frac{1}{\lambda}\left(\frac{e^\lambda-\delta}{1-\delta}\right)+\frac{k_se^\lambda}{1-\delta}\right)	=0.
	%	\end{align}
	\begin{align}
	\label{eq:eq_MG11_age_deriv}
	%	&\frac{\partial\Delta_{PIIR}}{\partial\lambda}	=0\nn &-\frac{1}{\lambda^2}\left(\frac{e^\lambda-\delta}{1-\delta}\right)^{k_s}+\frac{k_s}{\lambda}\left(\frac{e^\lambda-\delta}{1-\delta}\right)^{k_s-1}\frac{e^\lambda}{1-\delta}	=0\nn
	&-\frac{1}{\lambda}\left(\frac{e^\lambda-\delta}{1-\delta}\right)+\frac{k_se^\lambda}{1-\delta}	=0.
	\end{align}
	%	Since we assume $\lambda>0$ then the first two factors in \eqref{eq:eq_MG11_age_deriv} are strictly positive.
	Thus, to satisfy \eqref{eq:eq_MG11_age_deriv} we need 
	\begin{align}
	\label{eq:eq_MG11_age_deriv_reduction}
	%-e^\lambda + \delta +\lambda k_se^\lambda	&=0\nn
	%e^\lambda(k_s\lambda-1)+\delta				&=0\nn
	e^\lambda(k_s\lambda-1)						&=-\delta.
	\end{align} 
	Since $0\leq\delta\leq 1$, \eqref{eq:eq_MG11_age_deriv_reduction} implies that $k_s\lambda-1\leq0$. Hence \eqref{eq:eq_MG11_with_preempt_IIR_cond} holds. 
	Moreover, since $\lambda>0$, we have that $e^\lambda>1+\lambda$. This means that if $\lambda^*$ minimizes $\Delta_{\text{PIIR}}$, then
	\begin{align}
	\label{eq:eq_MG11_preempt_lower_bound}
	\Delta_{\text{PIIR}}^*=\Delta_{\text{PIIR}}(\lambda^*)	%&>\frac{1}{\lambda^*}\left(\frac{1+\lambda^*-\delta}{1-\delta}\right)^{k_s}\nn
	&> \frac{1}{\lambda^*}\left(1+\frac{\lambda^*}{1-\delta}\right)^{k_s}.
	\end{align} 
	%	Since $\lambda^*\leq \frac{1}{k_s}\leq 1$,  the lower bound in \eqref{eq:eq_MG11_preempt_lower_bound} becomes a tight approximation for typical values of $k_s$.
	Finally, in order to obtain $\lambda^*$ one needs to solve equation \eqref{eq:eq_MG11_age_deriv_reduction} which does not have a simple closed form expression. As an alternative, we can make the small $\lambda$ approximation $e^{\lambda^*}\approx 1+\lambda^*$. In this case, \eqref{eq:eq_MG11_age_deriv_reduction} reduces to
	\begin{equation}
	(1+\lambda)(k_s\lambda-1)=-\delta.
	\end{equation}
	This is a quadratic equation whose only positive root is given by $$\lambda_{IIR}= \frac{1-k_s+\sqrt{(k_s+1)^2-4k_s\delta}}{2k_s}.$$
	To obtain \eqref{eq:eq_MG11_with_preemption_IIR_age_min}, we replace $\lambda^*$ by $\lambda_{\text{IIR}}$ in \eqref{eq:eq_MG11_preempt_lower_bound}.
\end{proof}
Since $\lambda^*\leq \frac{1}{k_s}\leq 1$,  the lower bound in \eqref{eq:eq_MG11_with_preemption_IIR_age_min} becomes a tight approximation of the average age for typical values of $k_s$.
\subsection{Fixed Redundancy}
%In this policy, we apply two levels of coding: a packet level and a symbol level. Each update consists of $k_p$ packets encoded using a rateless code. This means that the monitor needs to receive $k_p$ decodable packets in order to decode the update. Moreover, each packet contains $k_s$ information symbols and is encoded using a $(n_s,k_s)$-Maximum Distance Separable (MDS) code. Hence, a packet can be decoded if at least $k_s$ symbols are not erased. 
In this case also the transmission of an update is terminated whenever one of these events happen first: $(i)$ $k_p$ packets are successfully transmitted, or $(ii)$ a new update is generated. As in the M/G/1/1 blocking system, we define the packet erasure probability $\epsilon_p=\sum_{i=0}^{k_s-1}{{n_s}\choose{i}}\delta^{n_s-i}(1-\delta)^i$.
\begin{thm}
	\label{thm:thm_FR_preemption}
	The average age of the information for an M/G/1/1 with preemption system using the FR policy is given by,
	\begin{equation}
	\label{eq:eq_FR_age_preemption}
	\Delta_{\text{PFR}}	=  \frac{1}{\lambda}\left(\frac{1-e^{-\lambda n_s}\epsilon_p}{e^{-\lambda n_s}(1-\epsilon_p)}\right)^{k_p}.
	\end{equation}
	%where $\epsilon_p$ is the packet erasure rate given by $\epsilon_p = \sum_{i=0}^{k_s-1}{{n_s}\choose{i}}\delta^{n_s-i}(1-\delta)^i$.
	Moreover, $\Delta_{\text{PFR}}$ has a minimum and the arrival rate $\lambda^*$ that achieves it should satisfy the condition
	\begin{equation}
	\label{eq:eq_MG11_with_preempt_FR_cond}
	\lambda^*\leq \frac{1}{n_sk_p}.
	\end{equation}
	The minimum age $\Delta_{\text{PIIR}}^*$ can be lower bounded using
	\begin{equation}
	\label{eq:eq_MG11_with_preemption_FR_age_min}
	\Delta_{\text{PFR}}^* \geq \frac{1}{\lambda_{\text{FR}}}\left(1+\frac{\lambda_{\text{FR}}n_s}{1-\epsilon_p}\right)^{k_p},
	\end{equation}
	where $\lambda^*\approx \lambda_{\text{FR}}= \frac{1-k_p+\sqrt{(k_p+1)^2-4k_p\epsilon_p}}{2n_sk_p}$.
\end{thm}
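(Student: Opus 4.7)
The plan is to parallel the proof of Theorem~\ref{thm:thm_IIR_preemption} step by step, exploiting the fact that the FR service time is simply a scaled version of the IIR negative binomial structure. First, I would identify $S = n_s M$ where $M$ is negative binomial $(k_p, 1-\epsilon_p)$ on $\{k_p, k_p+1, \ldots\}$, since each update requires $k_p$ successfully decoded packets and every packet, successful or not, occupies exactly $n_s$ channel uses. The MGF therefore factors as $\phi_S(s) = \phi_M(n_s s)$, and evaluating at $s = -\lambda$ yields $P_\lambda = \bigl(e^{-\lambda n_s}(1-\epsilon_p)/(1-e^{-\lambda n_s}\epsilon_p)\bigr)^{k_p}$. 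Plugging this into Theorem~\ref{thm:thm_MG11_preempt_general_age}, namely $\Delta = 1/(\lambda P_\lambda)$, and simplifying the exponential factors gives (\ref{eq:eq_FR_age_preemption}).

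Next, to establish (\ref{eq:eq_MG11_with_preempt_FR_cond}), I would rewrite $\Delta_{\text{PFR}}$ in the equivalent form $\tfrac{1}{\lambda}\bigl((e^{\lambda n_s}-\epsilon_p)/(1-\epsilon_p)\bigr)^{k_p}$, differentiate with respect to $\lambda$, and set the result to zero. Cancelling the common positive factor $(e^{\lambda n_s}-\epsilon_p)^{k_p-1}/\bigl[\lambda(1-\epsilon_p)^{k_p}\bigr]$ reduces the first-order condition to
\[
e^{\lambda n_s}(k_p n_s \lambda - 1) = -\epsilon_p.
\]
Since $\epsilon_p \in [0,1]$, the left-hand side must be non-positive, which forces $k_p n_s \lambda^* \leq 1$ and hence (\ref{eq:eq_MG11_with_preempt_FR_cond}). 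The existence of a minimum follows because $\Delta_{\text{PFR}} \to \infty$ both as $\lambda \to 0^+$ and as $\lambda \to \infty$, so a continuously differentiable positive function must attain its infimum at an interior stationary point.

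For the lower bound (\ref{eq:eq_MG11_with_preemption_FR_age_min}), I would apply $e^{\lambda n_s} > 1 + \lambda n_s$ inside the ratio $(e^{\lambda n_s}-\epsilon_p)/(1-\epsilon_p)$, which regroups to $1 + \lambda n_s/(1-\epsilon_p)$, giving the desired strict lower bound once evaluated at $\lambda_{\text{FR}}$. To pin down $\lambda_{\text{FR}}$ itself, I would substitute the same first-order expansion $e^{\lambda n_s}\approx 1+\lambda n_s$ into the stationarity equation, producing the quadratic $k_p n_s^2 \lambda^2 + n_s(k_p-1)\lambda - (1-\epsilon_p) = 0$, whose discriminant collapses via the identity $(k_p-1)^2 + 4k_p = (k_p+1)^2$ to give the positive root stated.

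The main obstacle I expect is bookkeeping around the symbol $n_s$: it appears inside the exponent, as a chain-rule factor after differentiation, and as a packet-to-symbol scaling, so care is needed not to drop or duplicate copies of it. Beyond that, the argument is essentially the Theorem~\ref{thm:thm_IIR_preemption} argument transported under the substitution $(\lambda,\delta,k_s)\mapsto(\lambda n_s,\epsilon_p,k_p)$, with the only genuinely new input being the decomposition $S = n_s M$.
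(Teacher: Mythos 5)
Your proposal is correct and follows essentially the same path as the paper: identify $S=n_s M$ with $M$ negative binomial, compute $P_\lambda=\phi_M(-\lambda n_s)$, apply Theorem~\ref{thm:thm_MG11_preempt_general_age}, reduce the first-order condition to $e^{\lambda n_s}(k_p n_s\lambda-1)=-\epsilon_p$, use $e^{\lambda n_s}>1+\lambda n_s$ for the bound, and solve the linearized quadratic for $\lambda_{\text{FR}}$. Your brief observation that $\Delta_{\text{PFR}}\to\infty$ at both $\lambda\to 0^+$ and $\lambda\to\infty$ to justify existence of an interior minimum is a small but worthwhile addition that the paper leaves implicit.
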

\begin{proof}
	%	Since the packets are being transmitted through a symbol erasure channel with erasure probability $\delta$ than the probability for the receiver to be unable to decode a packet is:
	%	\begin{align}
	%		\epsilon_p	&= \mathbb{P}(\text{less than $k_s$ symbols received})\nn
	%				&= \sum_{i=0}^{k_s-1}{{n_s}\choose{i}}\delta^{n_s-i}(1-\delta)^i.
	%	\end{align}
	The number $M$ of packets needed to be transmitted to decode an update is distributed as a negative binomial $(k_p, 1-\epsilon_p)$ random variable with $k_p$ successes and success rate $(1-\epsilon_p)$, $M\in\{k_p,k_p+1,\dots\}$. Since the transmission of each packet consumes $n_s$ channel uses, the service time $S$ of each update is $S=n_sM$. Thus, the moment generating function of $S$ is:
	\begin{equation}
	\phi_S(s)	=\mathbb{E}\left(e^{sn_sM}\right) = \phi_M(n_s s) = \left(\frac{e^{n_s s}(1-\epsilon_p)}{1-e^{n_s s}\epsilon_p}\right)^{k_p}.
	\end{equation}
	Using (\ref{eq13}), the fact that $P_\lambda=\phi_S(-\lambda)$ and the above expression we obtain (\ref{eq:eq_FR_age_preemption}).
	
	To prove condition \eqref{eq:eq_MG11_with_preempt_FR_cond} we differentiate $\Delta_{PFR}$ with respect to $\lambda$ and equate it to zero, yielding
	%	\begin{align}
	%	\label{eq:eq_MG11_FR_age_deriv}
	%	&\frac{\partial\Delta_{PFR}}{\partial\lambda}	=0\nn &-\frac{1}{\lambda^2}\left(\frac{e^{\lambda n_s}-\epsilon_p}{1-\epsilon_p}\right)^{k_p}+\frac{k_p}{\lambda}\left(\frac{e^{\lambda n_s}-\epsilon_p}{1-\epsilon_p}\right)^{k_p-1}\frac{n_se^{\lambda n_s}}{1-\epsilon_p}	=0\nn
	%	&\frac{1}{\lambda}\left(\frac{e^{\lambda n_s}-\epsilon_p}{1-\epsilon_p}\right)^{k_p-1}\left(-\frac{1}{\lambda}\left(\frac{e^{\lambda n_s}-\epsilon_p}{1-\epsilon_p}\right)+\frac{k_pn_se^{\lambda n_s}}{1-\epsilon_p}\right)	=0
	%	\end{align}
	\begin{align}
	\label{eq:eq_MG11_FR_age_deriv}
	%&\frac{\partial\Delta_{PFR}}{\partial\lambda}	=0\nn &-\frac{1}{\lambda^2}\left(\frac{e^{\lambda n_s}-\epsilon_p}{1-\epsilon_p}\right)^{k_p}+\frac{k_p}{\lambda}\left(\frac{e^{\lambda n_s}-\epsilon_p}{1-\epsilon_p}\right)^{k_p-1}\frac{n_se^{\lambda n_s}}{1-\epsilon_p}	=0\nn
	&-\frac{1}{\lambda}\left(\frac{e^{\lambda n_s}-\epsilon_p}{1-\epsilon_p}\right)+\frac{k_pn_se^{\lambda n_s}}{1-\epsilon_p}	=0.
	\end{align}
	%Since we assume $\lambda,\ n_s>0$ then the first two factors in \eqref{eq:eq_MG11_FR_age_deriv} are strictly positive.
	Thus, to satisfy \eqref{eq:eq_MG11_FR_age_deriv} we need 
	\begin{align}
	\label{eq:eq_MG11_FR_age_deriv_reduction}
	%-e^{\lambda n_s} + \epsilon_p +k_pn_s\lambda e^\lambda	&=0\nn
	%e^{\lambda n_s}(k_pn_s\lambda-1)+\epsilon_p				&=0\nn
	e^{\lambda n_s}(k_pn_s\lambda-1)						&=-\epsilon_p.
	\end{align} 
	Since $0\leq\epsilon_p\leq 1$, \eqref{eq:eq_MG11_FR_age_deriv_reduction} implies that $k_pn_s\lambda-1\leq0$. Hence \eqref{eq:eq_MG11_with_preempt_FR_cond} holds.
	
	As in the proof for Theorem~\ref{thm:thm_IIR_preemption}, here also we have:
	\begin{align}
	\label{eq:eq_MG11_preempt_FR_lower_bound}
	\Delta_{\text{PFR}}^*=\Delta_{\text{PFR}}(\lambda^*)	&> \frac{1}{\lambda^*}\left(1+\frac{n_s\lambda^*}{1-\epsilon_p}\right)^{k_p}.
	\end{align} 
	
	Finally, also as in the proof for Theorem~\ref{thm:thm_IIR_preemption}, we approximate the real value of $\lambda^*$ by solving the quadratic equation
	\begin{equation}
	(1+\lambda n_s)(k_pn_s\lambda-1)=-\epsilon_p.
	\end{equation}
	The only positive root is given by 
	$$ \lambda_{\text{FR}}= \frac{1-k_p+\sqrt{(k_p+1)^2-4k_p\epsilon_p}}{2n_sk_p}.$$
	To obtain \eqref{eq:eq_MG11_with_preemption_FR_age_min}, we replace $\lambda^*$ by $\lambda_{FR}$ in \eqref{eq:eq_MG11_preempt_FR_lower_bound}.
\end{proof}
Since $n_s\lambda^*\leq \frac{1}{k_p}\leq 1$,  the lower bound in \eqref{eq:eq_MG11_with_preemption_FR_age_min} becomes a tight approximation for typical values of $k_p$.

%%%%%%%%%%%%%%%%%%%%%%%%%% NUmerical results %%%%%%%%%%%%%%%%%%%%%%%%%%%%%%%
%%%%%%%%%%%%%%%%%%%%%%%%%%%%%%%%%%%%%%%%%%%%%%%%%%%%%%%%%%%%%%%%%%%%%%%%%%%%
\section{Numerical results}
\label{sec:sec_numerical}

In this section, we first compare the two HARQ  policies, IIR and FR, for the M/G/1/1 with and without preemption. Then, for each HARQ policy, we compare the performances of the two M/G/1/1 schemes. Moreover, for the simulation results discussed in this section, we assume the following setting: a symbol erasure channel with erasure rate $\delta=0.2$ and each update in IIR-HARQ and FR-HARQ contain $K=100$ information symbols. So for IIR-HARQ we have $f_s=100$ while for FR-HARQ, we assume each update is divided into $k_p=K/k_s$ packets where each packet is encoded using an MDS-$(k_s,n_s)$ code.   

We first start analyzing the M/G/1/1 system with preemption. Fig.~\ref{fig:fig_MG11_with_preempt_diff_ks} shows the average age for different values of $k_s$ around its minimum point. As we can notice, if we choose the optimum $n_s$ for a fixed $k_s$ and range of $\lambda$ then the average age decreases as the number of packets per update decreases. In fact, the black curve which corresponds to $k_p=1$ has the lowest average age around its minimum, followed by the blue curve associated with $k_p=5$ and the worst performance is for the system with $k_p=10$.  Fig.~\ref{fig:fig_MG11_with_preempt_diff_ks} also confirms the results in Theorem~\ref{thm:thm_IIR_preemption} and Theorem~\ref{thm:thm_FR_preemption} saying that $\Delta_{\text{PIIR}}$ and $\Delta_{\text{PFR}}$ achieve a minimum at a small value of $\lambda$. This figure also suggests that no matter how we choose $k_s$ and $n_s$, IIR outperforms FR. The values of $n_s$ chosen in Fig.~\ref{fig:fig_MG11_with_preempt_diff_ks} are such that they minimize the average age for a given $\delta$ and $k_s$. The existence of such optimum packet length in FR can be deduced from Fig.~\ref{fig:fig_MG11_preempt_ns}.  Here we set $\lambda=0.0066$, which minimizes the average age for $\delta=0.2$, and $k_s=20$. Fig.~\ref{fig:fig_MG11_preempt_ns} can be explained using the lower bound \eqref{eq:eq_MG11_with_preemption_FR_age_min}: for a given $\lambda$, as $n_s$ gets large, $\epsilon_p\to 0$ and the lower bound will be increasing with $n_s$ since $\left(1+\frac{n_s\lambda^*}{1-\epsilon_p}\right)>1$. However, for $n_s$ close to $k_s$, $\epsilon_p\to 1$ which also increases this lower bound. Thus, the packet length should be neither too small (equal to $k_s$) nor too large. As it is expected, Fig.~\ref{fig:fig_MG11_preempt_ns} also shows that the optimal packet length $n_s$ increases as the erasure rate $\delta$ increases.

The above results concerning the M/G/1/1 system with preemption apply also for the M/G/1/1 blocking system as it can be seen in Fig.~\ref{fig:fig_MG11_no_preempt_diff_ks} and \ref{fig:fig_MG11_no_preempt_ns}. However, some differences need to be noted. $(i)$ Fig.~\ref{fig:fig_MG11_no_preempt_diff_ks} confirms the results of Theorems~\ref{thm:thm_IIR_no_preemption} and \ref{thm:thm_FR_no_preemption} that the average age is a decreasing function of $\lambda$. $(ii)$ Fig.~\ref{fig:fig_MG11_no_preempt_diff_ks} shows that for any value of $\lambda$, increasing the number of packets per update increases the average age. $(iii)$ Fig.~\ref{fig:fig_MG11_no_preempt_ns} shows the existence of an optimal packet length $n_s$ for a given $\delta$, $\lambda$ and $k_s$.

Finally, we compare the performance of the M/G/1/1 with preemption and the M/G/1/1 blocking systems for each one of the HARQ policies. In both cases, Fig.~\ref{fig:fig_MG11_all_schemes_IIR_FR} shows that the M/G/1/1 blocking system performs better than its counterpart for all values of $\lambda$.

\begin{figure}[!t]
	\centering
	\includegraphics[scale=0.4]{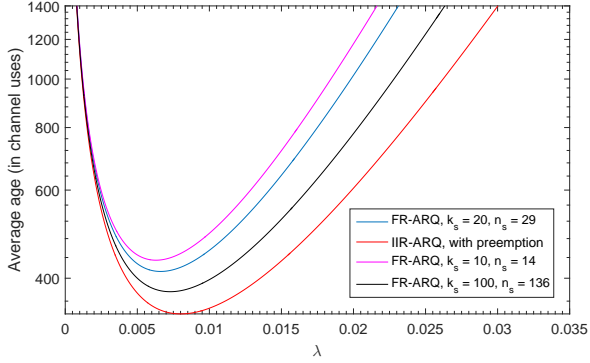}
	\caption{Comparing the performance of the FR-HARQ for the M/G/1/1 with preemption scheme when varying the number of information symbols in each packet. We assume the update has $100$ information symbols, $\delta=0.2$, $k_p=100/k_s$. $n_s$ is chosen to minimize the average age.}
	\label{fig:fig_MG11_with_preempt_diff_ks}
\end{figure}

%\begin{figure}[!t]
%	\centering
%	\includegraphics[scale=0.4]{./Graphics/MG11_With_preemption_IIR_and_FR}
%	\caption{Average age with respect to arrival rate for the M/G/1/1 with preemption scheme. We assume the update has $100$ information symbols, $\delta=0.2$, $k_s=20$ and $k_p=100/k_s$.}
%	\label{fig:fig_MG11_preempt}
%\end{figure}

\begin{figure}[!t]
	\centering
	\includegraphics[scale=0.4]{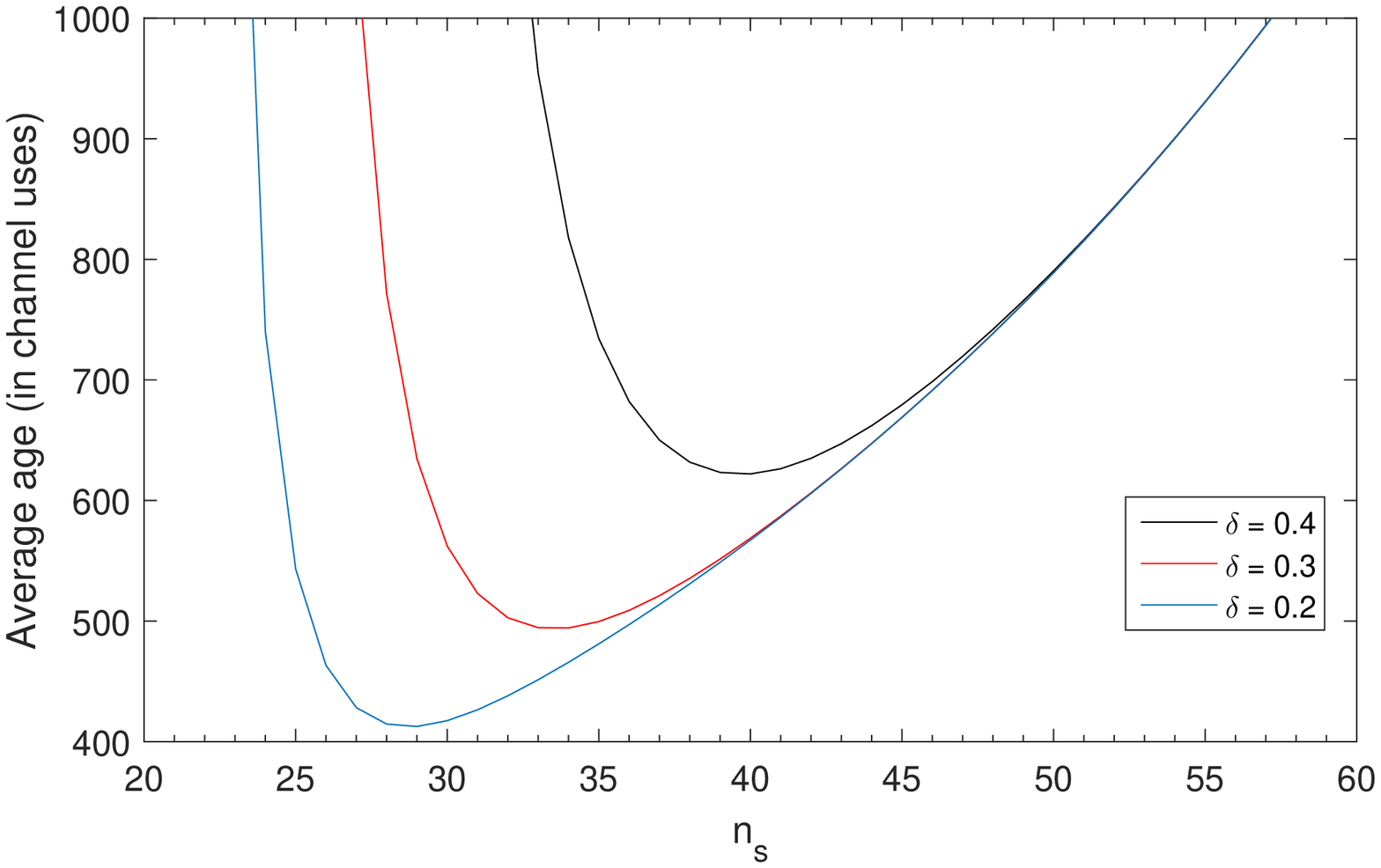}
	\caption{Average age with respect to codeword length for the M/G/1/1 with preemption scheme with FR-HARQ. We assume the update has $100$ information symbols, $\lambda=0.0066$, $k_s=20$ and $k_p=100/k_s$.}
	\label{fig:fig_MG11_preempt_ns}
\end{figure}

%\begin{figure}[!t]
%	\centering
%	\includegraphics[scale=0.4]{./Graphics/MG11_no_preemption_IIR_and_FR}
%	\caption{Average age with respect to arrival rate for the M/G/1/1 without preemption scheme. We assume the update has $100$ information symbols, $\delta=0.2$, $k_s=20$ and $k_p=100/k_S$.}
%	\label{fig:fig_MG11_no_preempt}
%\end{figure}

\begin{figure}[!t]
	\centering
	\includegraphics[scale=0.4]{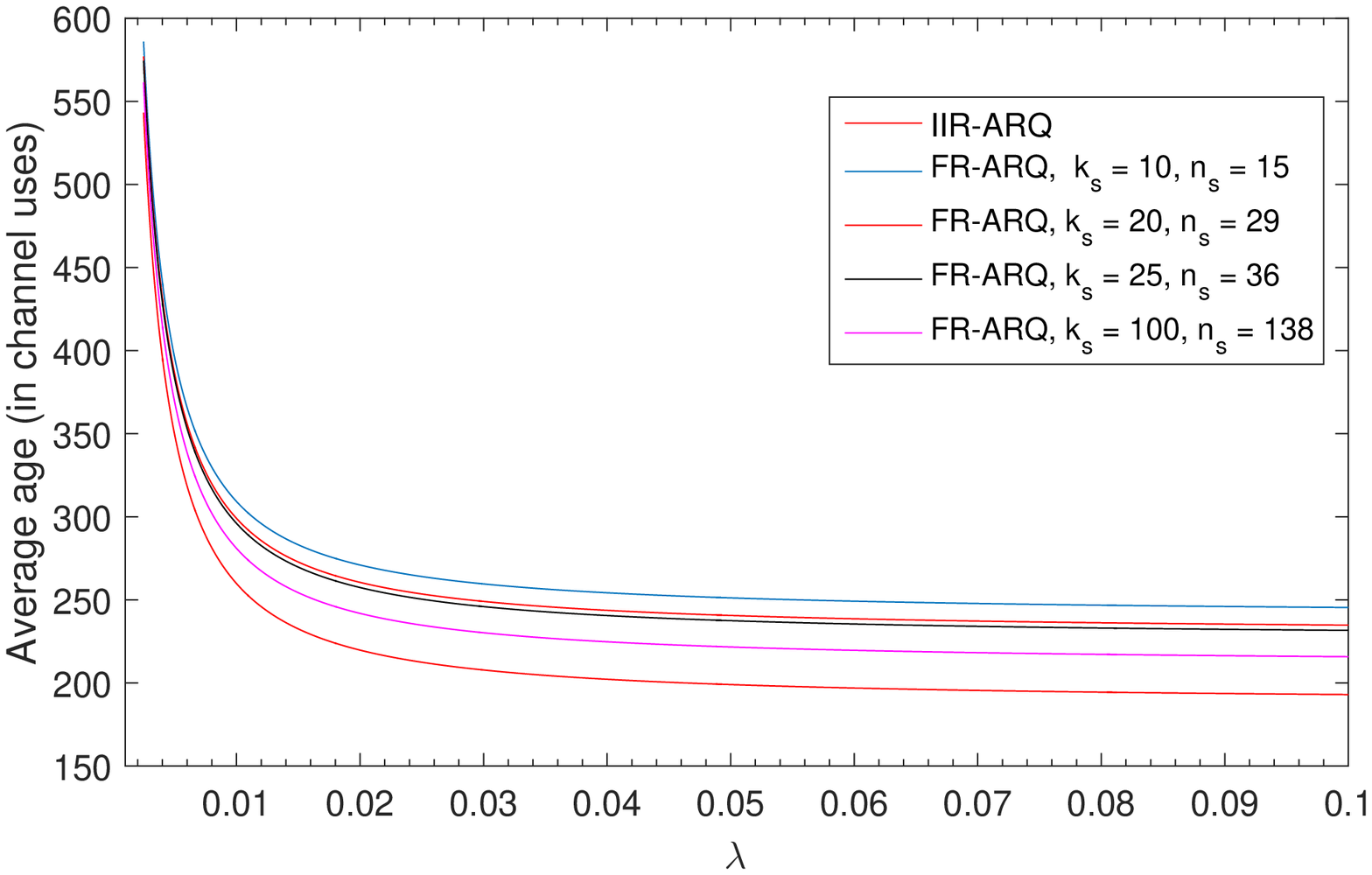}
	\caption{Comparing the performance of the FR-HARQ for the M/G/1/1 without preemption scheme when varying the number of information symbols in each packet. We assume the update has $100$ information symbols, $\delta=0.2$, $k_p=100/k_s$. $n_s$ is chosen to minimize the average age.}
	\label{fig:fig_MG11_no_preempt_diff_ks}
\end{figure}

\begin{figure}[!t]
	\centering
	\includegraphics[scale=0.4]{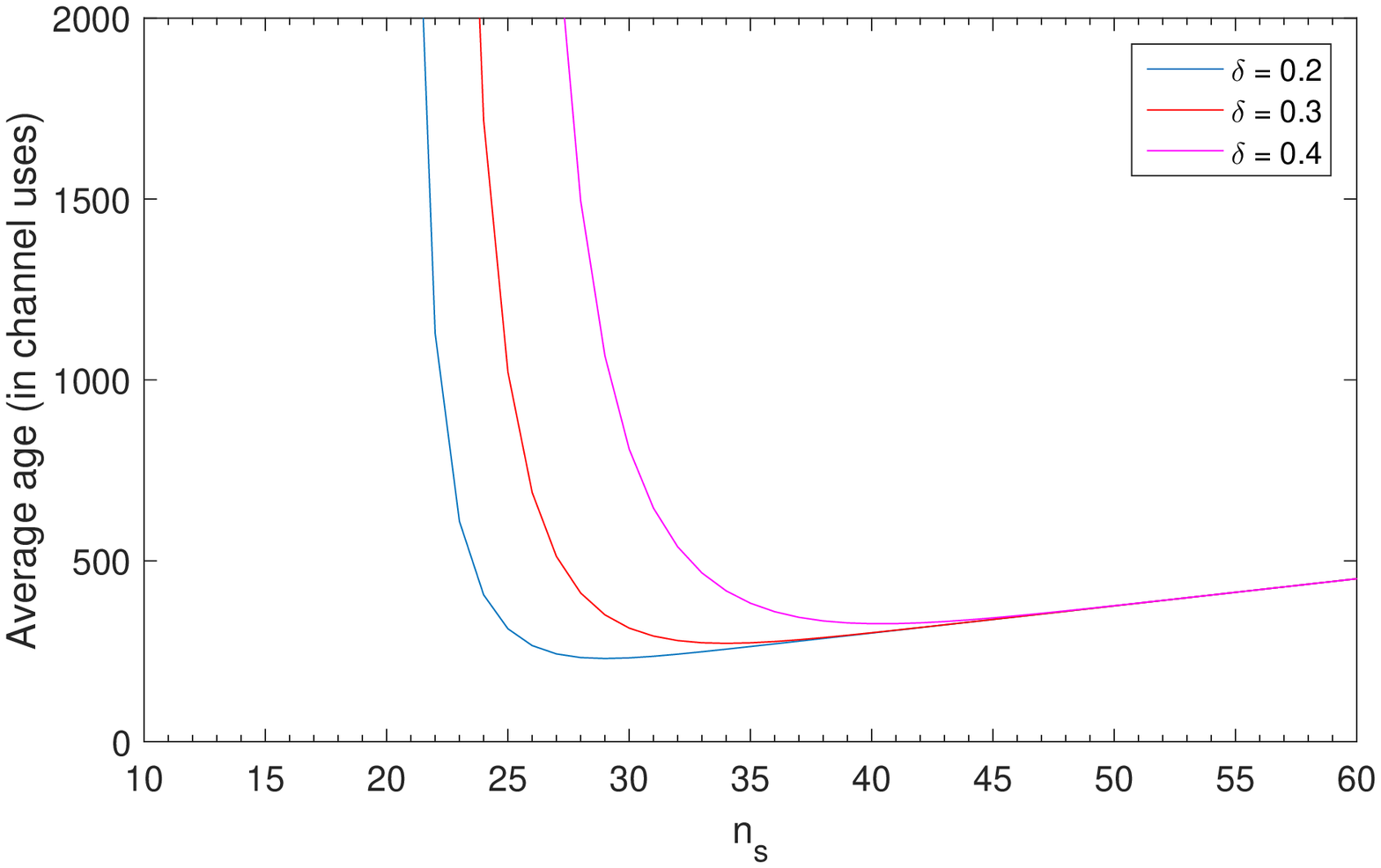}
	\caption{Average age with respect to codeword length for the M/G/1/1 without preemption scheme with FR-HARQ. We assume the update has $100$ information symbols, $\lambda = 1$, $k_s=20$ and $k_p=100/k_s$.}
	\label{fig:fig_MG11_no_preempt_ns}
\end{figure}

\begin{figure}[!t]
	\centering
	\includegraphics[scale=0.4]{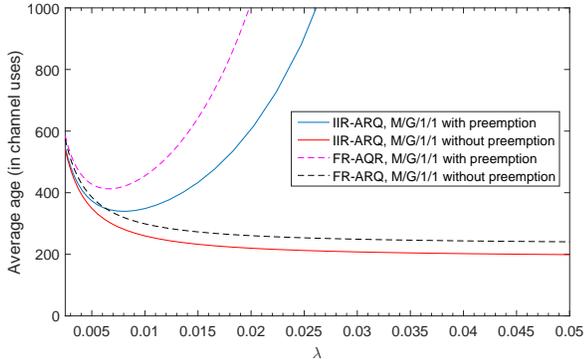}
	\caption{Comparing the performance of the two M/G/1/1 schemes when using IIR and FR. We assume the update has $100$ information symbols and $\delta=0.2$.}
	\label{fig:fig_MG11_all_schemes_IIR_FR}
\end{figure}

%\begin{figure}[!t]
%	\centering
%	\includegraphics[scale=0.4]{./Graphics/MG11_With_preemption_vs_no_preemption_IIR}
%	\caption{Comparing the performance of the two M/G/1/1 schemes when using IIR ARQ. We assume the update has $100$ information symbols and $\delta=0.2$.}
%	\label{fig:fig_MG11_IIR}
%\end{figure}
%
%\begin{figure}[!t]
%	\centering
%	\includegraphics[scale=0.4]{./Graphics/MG11_With_preemption_vs_no_preemption_FR}
%	\caption{Comparing the performance of the two M/G/1/1 schemes when using FR ARQ. We assume the update has $100$ information symbols, $\delta=0.2$, $k_s=20$, $k_p=100/k_s$ and $n_s=29$. $n_s$ is chosen to minimize the average age.}
%	\label{fig:fig_MG11_FR}
%\end{figure}

%%%%%%%%%%%%%%%%%%%%%% Conclusion %%%%%%%%%%%%%%%%%%%%%%%%%%%%%%%%%%%%%%%%%%%

\section{Conclusion}

In this paper we studied the M/G/1/1 system along with the possible update management policies it presents: preempting the current update or discarding the newly generated one. We derived general expressions for their average age and used this result to compute the average age when considering a practical scenario: updates are sent over a symbol erasure channel using two different HARQ protocols, IIR and FR. In both cases, prioritizing the current update being sent and not preempting it turned out to be the best strategy. Moreover, as it is expected, the IIR protocol gives better performance from an age point of view than FR. Finally, we argued through simulations that for the FR protocol, ensuring reliable delivery of every update packet (by using large codeword length $n_s$) doesn't achieve the optimal average age.

%%%%%%%%%%%%%%%%%%%%% Acknowledgements%%%%%%%%%%%%%%%%%%%%%%%%%%%%%%%%%%%%%%%%%

\section*{Acknowledgements}

This research was supported in part by grant No.  200021\_166106/1 of the Swiss National Science Foundation and in part by the NSF Award CIF-1422988.

% trigger a \newpage just before the given reference
% number - used to balance the columns on the last page
% adjust value as needed - may need to be readjusted if
% the document is modified later
%\IEEEtriggeratref{8}
% The "triggered" command can be changed if desired:
%\IEEEtriggercmd{\enlargethispage{-5in}}

% references section

% can use a bibliography generated by BibTeX as a .bbl file
% BibTeX documentation can be easily obtained at:
% http://mirror.ctan.org/biblio/bibtex/contrib/doc/
% The IEEEtran BibTeX style support page is at:
% http://www.michaelshell.org/tex/ieeetran/bibtex/
\bibliographystyle{IEEEtran}
% argument is your BibTeX string definitions and bibliography database(s)
\bibliography{IEEEabrv,ehcache}
%
% <OR> manually copy in the resultant .bbl file
% set second argument of \begin to the number of references
% (used to reserve space for the reference number labels box)
%\begin{thebibliography}{1}
%\bibitem{IEEEhowto:kopka}

%H.~Kopka and P.~W. Daly, \emph{A Guide to \LaTeX}, 3rd~ed.\hskip 1em plus
%  0.5em minus 0.4em\relax Harlow, England: Addison-Wesley, 1999.

%\end{thebibliography}

% that's all folks
\end{document}